\newtheorem{thm}{Proposition}
\title{Fractional Polynomials Models as Special Cases of Bayesian Generalized Nonlinear Models}
\author{ \href{https://orcid.org/0000-0002-3244-6571}{\includegraphics[scale=0.06]{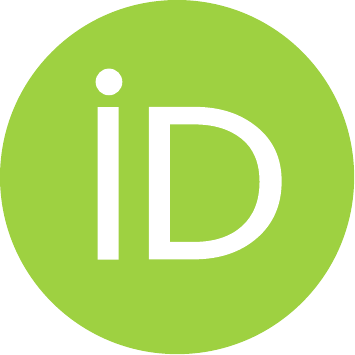}\hspace{1mm}Aliaksandr Hubin}\thanks{corresponding author} \\
	Department of Mathematics,\\ University of Oslo\\
	BIAS, NMBU\\OUC \\
	\texttt{aliaksandr.hubin@nmbu.no} \\
	%% examples of more authors
	\And
	\href{https://orcid.org/0000-0000-0000-0000}{\includegraphics[scale=0.06]{orcid.pdf}\hspace{1mm}Georg Heinze} \\
	 Institute of Clinical Biometrics,\\Center for Medical Data Science,\\ Medical University of Vienna\\
	\texttt{georg.heinze@meduniwien.ac.at} \\
	\And
	\href{https://orcid.org/0000-0002-7441-6880}{\includegraphics[scale=0.06]{orcid.pdf}\hspace{1mm}Riccardo De Bin} \\
	Department of Mathematics,\\ University of Oslo\\
	\texttt{debin@math.uio.no} \\
	%% \AND
	%% Coauthor \\
	%% Affiliation \\
	%% Address \\
	%% \texttt{email} \\
	%% \And
	%% Coauthor \\
	%% Affiliation \\
	%% Address \\
	%% \texttt{email} \\
	%% \And
	%% Coauthor \\
	%% Affiliation \\
	%% Address \\
	%% \texttt{email} \\
}
\begin{document}
\maketitle

% Brief abstract of the paper:
\begin{abstract}
We propose a framework for fitting fractional polynomials models as special cases of Bayesian Generalized Nonlinear Models, applying an adapted version of the Genetically Modified Mode Jumping Markov Chain Monte Carlo algorithm. The universality of the Bayesian Generalized Nonlinear Models allows us to employ a Bayesian version of the fractional polynomials models in any supervised learning task, including regression, classification, and time-to-event data analysis. We show through a simulation study that our novel approach performs similarly to the classical frequentist fractional polynomials approach in terms of variable selection, identification of the true functional forms, and prediction ability, while providing, in contrast to its frequentist version, a coherent inference framework. Real data examples provide further evidence in favor of our approach and show its flexibility.
\end{abstract}

% Keywords (at most 5):
\keywords{Bayesian model selection; MCMC; Non-linear effects}

\section{Introduction}
Linear regression models are arguably among the most popular tools in statistics, especially  their generalized versions (GLM). As the name suggests, they model a (function of a) response variable with a linear function of the predictors. While imposing a linear structure has many advantages, including the reduction of the variance, often it may not adequately reflect the real effect of the predictor on the response and may lead to nonlinear structures of the residuals, which indicates a violation of the model assumptions and inappropriateness of the standard asymptotic inference procedures. For example, misspecifying a truly non-linear structure in the predictor-response relationship as linear may result in biased estimates of the regression coefficients, a non-constant variance, and finally in a wrong interpretation of the modeling results. Heteroscedasticity can still be a problem in Bayesian linear regression models. The reason is that the posterior distributions of the regression coefficients depend on the likelihood, which in turn depends on the residuals. Residuals with non-constant variance may affect the shape of the likelihood and lead to incorrect posteriors for the regression coefficients. 

Non-linearities in the predictor-response relationship can be adequately captured by flexible modeling approaches like splines, often used within the (generalized) additive model framework. Although powerful and effective, these approaches have the strong drawback of making the model interpretation hard. Roughly speaking, these methods do not supply regression coefficients that can be easily interpreted. For this reason, it is often convenient to transform the predictors with specific global functions, for example by taking the logarithm or the square root, and then assuming a linear relationship of the transformed predictor with the response variable. In a linear model, the corresponding regression coefficient will then have the familiar interpretation of ``expected difference in the response variable for a unit difference of the -- now transformed -- predictor''. Following this way of thinking, \cite{RoystonAltman1994} introduced the fractional polynomial approach. The basic idea is to select the transformation of the predictor among a set of 8 possible functions ($x^{-2}, x^{-1}, x^{-0.5},  \log x, x^{0.5}, x^{1}, x^{2}, x^{3}$), that is then added to the linear model. This set corresponds to a set of powers of polynomials $\{-2,-1,-0.5,0,0.5,2,3\}$ for the Box-Tidwell transformation, where $x^0 = \log(x)$ \citep{BoxTidwell1962}.

Many refinements have been considered, including combinations of these functions \cite[fractional polynomials of order $d$, see][]{RoystonAltman1994}, a multivariable approach \citep{SauerbreiRoyston1999}, a modification to account for interactions \citep{RoystonSauerbrei2004, RoystonSauerbrei2008}, and others. In particular, the fractional polynomials of order $d$, hereafter $FP(d)$, allow multiple transformations of the predictor, such that, for a simple linear model, 
$$
E[Y|X] = \beta_0 + \beta_1 X^{p_1} + \dots + \beta_d X^{p_d}
$$
where $p_1, \dots, p_d$ belong to $\{-2,-1,-0.5,0,0.5,2,3\}$. By convention, the case that $p_{j+1} = p_j$ indicates repeated power with transformations $X^{p_j}, X^{p_j} \log(X)$. While in theory, fractional polynomials of any order $d$ are possible, in practice almost only fractional polynomials of order 1 or 2 have typically been used \citep[][Ch 5.9]{RoystonSauerbrei2008}.

Multivariable fractional polynomials are the natural extension of the procedure to multivariable regression problems. In this case, each predictor $X_1, \dots, X_J$ gets a specific transformation among those allowed by the order of the fractional polynomial. While conceptually straightforward, this modification complicates the fitting procedure, due to the high complexity of the model space. \cite{SauerbreiRoyston1999} proposed a sort of back-fitting algorithm to fit multivariable fractional polynomial (MFP) models. Herein, all variables are first ordered based on the significance of their linear effect (increasing p-values for the hypothesis of no effect). Then, variable by variable, a function selection procedure (FSP) based on a closed testing procedure with likelihood ratio tests is used to decide whether the variable must be included or can be omitted and if it should be included with the best-fitting second-order fractional polynomial, with the best-fitting first-order fractional polynomial or without transformation. The FSP is performed for all predictors, keeping the remaining fixed (as transformed in the previous step) for a pre-specified number of rounds or until there are no differences with the results of the previous cycle.

Limited to Gaussian linear regression, \cite{SabanesboveHeld2011} implemented an alternative approach to MFP under the Bayesian paradigm. Based on hyper-$g$ priors \citep{LiangAl2008}, their procedure explores the model space by MCMC and provides a framework in which inferential results are not affected, for example, by the repeated implementation of likelihood-ratio tests. Obviously, the restriction to Gaussian linear regression problems highly limits the applicability of this procedure. Moreover, while computationally attractive, the MCMC algorithm may struggle to efficiently explore the complicated model space induced by highly correlated predictors.

To address these drawbacks, here we propose a novel approach based on the characterization of the fractional polynomial models as special cases of Bayesian Generalised Nonlinear Models \citep[BGNLM][]{hubin2021flexible} and the implementation of a fitting algorithm based on the Genetically Modified Mode Jumping MCMC (GMJMCMC) algorithm of \cite{hubin2020novel}. Bayesian Generalised Nonlinear Models provide a very general framework that allows us a straightforward implementation of fractional polynomials beyond the linear Gaussian regression case, including, but not limited to, generalized linear models, generalized linear mixed models, Cox regression, and models with interactions. In addition, GMJMCMC enables search through the model space to find the set of models with probability mass. GMJMCMC is an improved version of the Mode Jumping MCMC, which is an MCMC variant designed to better explore the posterior distributions, particularly useful when the posterior distribution is complex and has multiple modes. Mode Jumping struggles when the dimensionality increases, which can be an issue in the case of fractional polynomials, as one needs to explore $2^{8J}$  models (in the case of the first-order fractional polynomials described above) as compared to $2^{J}$ for the linear models. The GMJMCMC algorithm resolves these limitations by creating a genetic evolution of sets of features controlled by the genetic component, where simple Mode Jumping MCMCs can be run. In this work, therefore, we provide a powerful tool for fitting fractional polynomials in various applications.

The rest of the paper is organized as follows. Section \ref{sec:BMFP} describes the multivariable Bayesian fractional polynomials models in the framework of BGNLM, including the fitting algorithm based on GMJMCMC. In Section \ref{sec:ART} the performance of our procedure is evaluated via simulations, while three applications to real data are reported in Section \ref{sec:real}, where we show our approach applied to a regression, classification, and time-to-event data problems. Finally, some remarks conclude the paper in Section \ref{sec:conclusions}.

\section{Methods}\label{sec:BMFP}

\subsection{Bayesian Generalized Nonlinear Models}

Consider the situation with a response variable $Y$ and a $J$-dimensional random vector of input predictors $\mathbf{X} = (X_1, \dots, X_J)$. The Bayesian fractional polynomials models can be seen as special cases of the Bayesian Generalized Nonlinear Models \citep{hubin2021flexible},

\begin{equation}\label{eq:BGNLM}
  \begin{array}{rcl}
  Y &\sim& \mathfrak{f}\left(y | \mu,\phi\right)\\
  h\left(\mu(\boldsymbol X)\right)&=&\alpha+\sum_{j=1}^{m} \gamma_{j}\beta_{j} F_j(\mathbf{X},\eta_j)
  \end{array}
\end{equation} 

where $\mathfrak{f}$ denotes the parametric distribution of $Y$ belonging to the exponential family with mean $\mu$ and dispersion parameter $\phi$. The function $h$ is a link function, $\alpha$ and $\beta_{j}, j = 1, \dots, m$ are unknown parameters, and $\gamma_{j}$ is an indicator variable that specifies whether the (possibly nonlinear) transformation of predictors $F_j$, and its set of inner parameters $\eta_j$, is included in the model.

Equation \eqref{eq:BGNLM} provides a very general framework, that contains as special cases models that span from the linear Gaussian regression to Neural Networks. It is convenient, for our purpose, to constrain the framework to only allow univariate transformations $\rho_{k}(x_j)$, $k = 1, \dots, K$, of the predictors, and regression on the mean, $\mu = \mu(\boldsymbol X)$:

\begin{equation}\label{eq:LRM}
  \begin{array}{rcl}
 Y &\sim& \mathfrak{f}\left(y | \mu(\boldsymbol X);\phi\right)\\
 h\left(\mu(\boldsymbol X)\right)&=&\alpha+\sum_{j=1}^{J}\sum_{k=1}^K \gamma_{jk}\beta_{jk} \rho_{k}(x_j).
  \end{array}
\end{equation} 

Note that the vector $M = \{\gamma_{jk}, j = 1, \dots, J, k = 1, \dots, K\}$ fully characterizes a model, as it defines which predictors $x_j$ are included in the model and after which transformation $\rho_{k}$. It is then sufficient to define priors on $M$ and on the related (read given $M$) coefficients $\alpha$ and $\beta_{jk}$ to complete the procedure.

Let us start by defining the prior for $M$,

\begin{equation}\label{eq:modelprior}
P(M)\propto\ \mathbb{I}\left(|M|\leq q\right)\prod_{j=1}^J\prod_{k=1}^K \mathbb{I}\left(\left[\sum_{k = 1}^K \gamma_{jk}\right]\leq d\right)a_k^{\gamma_{jk}},
\end{equation}

where $q, d \in \mathcal{N}$ and $0 < a_k < 1$, $k = 1, \dots, K$. Here $|M| = \sum_{j=1}^J \sum_{k = 1}^K \gamma_{jk}$ is the total number of terms included in the model, which can be bounded by $q$ to favor sparse models, and $a_k^{\gamma_{jk}}$ are prior penalties on the individual terms. More interestingly, here $\mathbb{I}(\sum_{k = 1}^K\gamma_{jk}\leq d)$ are (common for each predictor) prior indicators which restrict the number of terms per predictor to be simultaneously included into the model. %Basically, here $d$ controls the order of the fractional polynomials, with $d = 1$ only allowing for one polynomial term per predictor, i.e., the classical definition of a fractional polynomial model. At the same time, $d>1$ allows softer versions of prior penalties and thus more flexibility in modelling a fractional polynomial regression. Thus, $q$ and $d$ are defining prior constraints on the models.

%Just as described in \citet{hubin2020novel} and \citet{hubin2021flexible}, if $M$ and $M'$ are two vectors satisfying the constraints induced by $p$ and $q$ and differing in one component, say $\gamma_{j'k'}=1$ and $\gamma_{jk} = 0$, then
%\[
%\frac{P(M')}{P(M)}=a<1
%\]  
%showing that larger models are penalized more. This result easily generalizes to the comparison of more different models  and provides the basic intuition behind the chosen prior. 

As a fully Bayesian approach, BGNLM also requires priors on the parameters. Here we follow \citet{hubin2020novel} and use the common improper prior \citep{LiClyde2018,bayarri2012criteria} $\pi(\phi) = \phi^{-1}$ to the unknown dispersion parameter $\phi$, and simple Jeffreys priors ~\citep{jeffreys1946invariant,gelman2013bayesian} $|\mathcal{J}^M_n(\alpha,\boldsymbol{\beta})|^{\frac{1}{2}}$ on the regression parameters, where $\mathcal{J}^M_n(\alpha,\boldsymbol{\beta})$ is the observed information for the model $M$. 

Jeffreys prior is known to have attractive properties of being objective and scale invariant \citep{gelman2013bayesian}. Moreover, when using Jeffreys prior, the marginal likelihood of a model $P(Y| M)$ can be approximated accurately using the Laplace approximation. In the case of a Gaussian model, if we choose the aforementioned priors for the dispersion parameter and the coefficients, the Laplace approximation becomes exact \citep{raftery1997bayesian}. This results in a marginal likelihood of the simple form
\begin{equation} \label{MargLik_Jeffrey}
P(Y| M) \propto P(Y| M,\hat\theta)\ n^{\frac{|M|}{2}},
\end{equation}
where $\hat\theta$ refers to the maximum likelihood estimates of all parameters involved and $n$ is the sample size. On the log scale, this corresponds exactly to the BIC model selection criterion \citep{Schwarz} when using a uniform model prior.

%In the case of logistic regression, the marginal likelihood under Jeffreys prior is approximately \eqref{MargLik_Jeffrey} with an error of order $O(n^{-1})$ \citep{KadaneErr,claeskens_hjort_2008}. As noted by \citet{Barber2016}, Laplace approximations of the marginal likelihood yield very accurate results and can be trusted in Bayesian model selection problems.

\subsection{Consistency of model selection under our priors}

Assume $a_k = \exp(-s_k \log n)$, where $s_k$ is an arbitrary positive and finite scalar.  This is a special case of what is called BIC-type penalization of complexity in \citet{hubin2021flexible} that we shall also use in the experimental sections. Let $\hat{\theta}_i$ be the maximum likelihood estimate (MLE) of the parameters in model $M_i$. Define $p_i = |M_i| + \sum_{j=1}^J\sum_{k = 1}^K\gamma^i_{jk}{2s_k}$ as the sum of the standard BIC penalty \citet{claeskens_hjort_2008} and our prior penalty (excluding $\log n$) for $M_i$. Define $\text{PIC}_i$ to be a negative log posterior (up to a constant) of $M_i$ under Laplace approximations, i.e. as 
\begin{equation}\label{eq:PIC}
 \text{PIC}_i = -2l(\hat{\theta}_i|M_i) + p_i\log n,   
\end{equation}
where $l({\theta}_i|M_i) = \log P(Y|{\theta}_i,M_i)$ is the log-likelihood of the data given the MLE of parameters in model $M_i$. Then $P(M_i|Y)\propto \exp(-\tfrac{\text{PIC}_i}{2})$. Assuming the true model $M_0$ is in the set of candidate models and does not coincide with the null model, in the following proposition we show that as $n\rightarrow\infty$, the probability of selecting the true model $M_0$ among the candidate models goes to one.

\begin{thm}
Let $a_k = \exp(-s_k \log n)$ with $0<s_k<\infty$. Let $M_0$ be the unique parsimonious true model living on our model space $\mathcal{M}$ which has $1<|M_0|\leq q$ and $\sum_{k = 1}^K \gamma^0_{jk} \leq d, \forall j = 1,\dots,J$. Further, let $M_1,\dots,M_K \subseteq \mathcal{M}$ be the set of candidate models on $\mathcal{M}$ that satisfy constraints induced by $q$ and $d$. Then PIC criterion from equation \eqref{eq:PIC} is consistent in selecting $M_0$ among  $M_1,\dots,M_K$.
\end{thm}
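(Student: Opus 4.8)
The plan is to show that the PIC criterion asymptotically separates the true model $M_0$ from every competitor, by splitting the candidate set into two groups and bounding $\text{PIC}_i - \text{PIC}_0$ from below in each. First I would recall the standard large-sample behaviour of the log-likelihood ratio. For a model $M_i$ that contains $M_0$ (an \emph{overfitting} model, so $p_i > p_0$), classical likelihood theory gives $2l(\hat\theta_i|M_i) - 2l(\hat\theta_0|M_0) = O_P(1)$ (it converges in distribution to a $\chi^2$ with $|M_i|-|M_0|$ degrees of freedom, or at worst is stochastically bounded). Hence
\begin{equation}\label{eq:overfit}
\text{PIC}_i - \text{PIC}_0 = -2\bigl(l(\hat\theta_i|M_i) - l(\hat\theta_0|M_0)\bigr) + (p_i - p_0)\log n = (p_i - p_0)\log n + O_P(1),
\end{equation}
and since $p_i - p_0 \geq \min_{j,k}(1 + 2s_k) > 0$ is bounded away from zero, the right-hand side tends to $+\infty$ in probability. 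For a model $M_i$ that does \emph{not} contain $M_0$ (an \emph{underfitting} model), the key fact is that the per-observation expected log-likelihood is uniquely maximised (over the models on $\mathcal M$) at $M_0$, so by a law of large numbers $\tfrac{1}{n}l(\hat\theta_i|M_i) \to c_i < c_0 = \tfrac{1}{n}l(\hat\theta_0|M_0) + o(1)$ for some constant gap $c_0 - c_i > 0$. Then $-2(l(\hat\theta_i|M_i) - l(\hat\theta_0|M_0)) = 2n(c_0 - c_i) + o_P(n)$ grows linearly in $n$, which dominates the penalty difference $(p_i - p_0)\log n = O(\log n)$ regardless of its sign, so again $\text{PIC}_i - \text{PIC}_0 \to +\infty$.

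Next I would combine the two cases. Since the model space $\mathcal M$ (restricted by the constraints $q$ and $d$) is \emph{finite} — there are at most $2^{JK}$ models, and $K$, $J$ are fixed — one can take a union bound / maximum over the finitely many competitors: for each fixed $\varepsilon>0$,
\[
P\Bigl(\min_{i \neq 0}(\text{PIC}_i - \text{PIC}_0) > 0\Bigr) \;=\; P\Bigl(\bigcap_{i\neq 0}\{\text{PIC}_i - \text{PIC}_0 > 0\}\Bigr) \;\longrightarrow\; 1,
\]
because each of the finitely many events has probability tending to one. Since $P(M_i|Y) \propto \exp(-\text{PIC}_i/2)$, the event $\{\min_{i\neq 0}(\text{PIC}_i - \text{PIC}_0) > 0\}$ is exactly the event that $M_0$ has the highest posterior probability among $M_1,\dots,M_K$, so $P(\widehat M = M_0) \to 1$, which is the claimed consistency. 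One should also note explicitly why $M_0$ being the \emph{unique} parsimonious true model matters: uniqueness rules out a tie with another model that has the same maximal expected log-likelihood, which is what forces the strict gap $c_0 - c_i > 0$ in the underfitting case and the strict sign $p_i - p_0 > 0$ in the overfitting case.

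The main obstacle I expect is making the underfitting argument rigorous in the generality claimed (arbitrary exponential-family response, arbitrary link, time-to-event likelihoods): one needs a uniform law of large numbers for the log-likelihood and identifiability of $M_0$ as the KL-projection minimiser, i.e. that no strictly smaller or non-nested model can match $M_0$'s limiting fit. In the write-up I would handle this by invoking standard regularity conditions for consistency of the MLE and for BIC consistency (as in \citet{claeskens_hjort_2008}) — finite-dimensional parameters, smoothness of the likelihood, a well-separated maximiser of the expected log-likelihood — and citing that under these conditions $\tfrac1n l(\hat\theta_i|M_i)$ converges to the maximal expected log-likelihood attainable within $M_i$. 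The overfitting case is essentially Wilks' theorem plus the observation that our extra penalty $2s_k\log n$ only makes the penalty gap larger, so it requires no new ideas beyond the classical BIC-consistency proof; the genuinely new bookkeeping is just tracking the modified penalty $p_i$ through \eqref{eq:overfit}.
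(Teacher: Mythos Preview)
Your proposal is correct and follows essentially the same two-case argument as the paper: separate overfitting (nested) competitors from underfitting (non-nested) ones, use the strictly positive $O(\log n)$ penalty gap to eliminate the former and the linear-in-$n$ KL gap to eliminate the latter, then conclude via finiteness of the constrained model space. If anything, your write-up is a bit more careful than the paper's own proof---you explicitly invoke Wilks' theorem to control the likelihood ratio as $O_P(1)$ in the overfitting case and flag the regularity conditions needed for the LLN step, whereas the paper leaves these implicit.
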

\begin{proof}

Let $A_n$ be the event that the PIC selects the true model $M_0$, i.e.:
$A_n = \mathbb{I}\{\text{PIC}_0 < \text{PIC}_1, \dots, \text{PIC}_K\}.$
We want to show that $\text{P}(A_n = 1)\rightarrow 1$ as $n\rightarrow\infty$.
By the law of large numbers and the continuous mapping theorem, we have
$\underset{n\rightarrow\infty}{\text{plim}}  \frac{1}{n}l(\hat{\theta}_0|M_0) =  \mathbb{E}_0[l(\theta_0|M_0)],$
where $\text{plim}$ is the limit in probability operator and $\mathbb{E}_0$ denotes the expectation under the true model $M_0$.
Therefore, we have
$$\underset{n\rightarrow\infty}{\text{plim}}\text{PIC}_0 = \underset{n\rightarrow\infty}{\text{plim}} -2l(\hat{\theta}_0|M_0) + p_0\log n = -2n\mathbb{E}_0[l(\theta_0|M_0)] + p_0\log n.$$
Similarly,
$\underset{n\rightarrow\infty}{\text{plim}}\text{PIC}_i = -2n\mathbb{E}_i[l(\theta_i|M_i)] + p_i\log n,$
where $\mathbb{E}_i$ denotes the expectation under model $M_i$. Taking the limiting difference between $\text{PIC}_0$ and $\text{PIC}_i$, we have
$$\underset{n\rightarrow\infty}{\text{plim}} \text{PIC}_0 - \text{PIC}_i = -2n\Delta_i + C_i\log n = -\infty,\forall i: M_i \in \{M_1,\dots,M_K\} \setminus M_0\subseteq \mathcal{M}$$
where $\Delta_i = \mathbb{E}_0[l(\theta_0|M_0)] - \mathbb{E}_i[l(\theta_i|M_i)]$ and $C_i = p_0 - p_i$. In other words, we show that
$$\underset{n\rightarrow\infty}{\text{plim}} (A_n) = \underset{n\rightarrow\infty}{\text{plim}} \mathbb{I}(\text{PIC}_0 < \text{PIC}_i, i\neq 0) = \underset{n\rightarrow\infty}{\text{plim}}\mathbb{I}(\Delta_i > \frac{1}{2n}C_i\log n, i\neq 0) = 1.$$ Here, $\Delta_i \geq 0$ and we have two cases to check: 1. if $\Delta_i=0$ then $M_0$ is a nested model of $M_i$ and hence $C_i<0$ by the uniqueness and parsimony of the true model. 2. For $\Delta_i > 0$ it is sufficient that $\lim_{n\rightarrow \infty} \frac{1}{2n}(C_i)\log n = 0,\forall C_i: |C_i|<\infty$, and $C_i$ is always finite by the construction of our priors and the fact that $J<\infty$ and $K<\infty$. Thus, we have shown that the PIC criterion is consistent in selecting the true model as the sample size increases.
\end{proof}

\subsection{Bayesian fractional polynomials models as Bayesian Generalized Nonlinear Models}\label{specialCase}

In order to recover our fractional polynomials models, we just need to specify the appropriate set of transformations $\mathcal{D}$ and parameters in the prior on $M$ defined in equation \eqref{eq:modelprior}. The parameters of the latter, in particular, control both the order of the fractional polynomials and the model selection mechanism.

\paragraph{Set of transformations.} As per the definition of fractional polynomials, the following transformations of each predictor are allowed: the identity, $\mathbf{F}_0=\{x\}$; 7 simple functions $\mathbf{F}_1= \{x^{-2}, x^{-1}, x^{-0.5}$, $\log x, x^{0.5}, x^{2}, x^{3}\}$; and 8 functions specifying repeated powers $\mathbf{F}_2 = \{x^{-2}\log x$, $x^{-1}\log x, x^{-0.5}\log x$, $\log x\log x, x^{0.5}\log x$, $x\log x$, $x^{2}\log x, x^{3}\log x\}$. If we want to fit fractional polynomials of order 1, then $\mathcal{D} = \{\mathbf{F}_0 \cup \mathbf{F}_1\}$, while for fractional polynomials of order 2, $\mathcal{D} = \{\mathbf{F}_0 \cup \mathbf{F}_1 \cup \mathbf{F}_2\}$. In this framework, it is straightforward to increase the order of the fractional polynomials by adding further interaction terms, but according to \citet[][Ch. 5.9]{RoystonSauerbrei2008} they are not used in practice, and we do not consider them here. 

\paragraph{Order of the fractional polynomials.} The order of the fractional polynomials is basically controlled by the value of the parameter $d$ in the prior \eqref{eq:modelprior}, as $d$ is the maximum number of transformations that are allowed for each explanatory variable. So, $d = 1$, by only allowing for one polynomial term per variable, leads to a FP(1), while $d > 1$ allows more flexibility in modeling a fractional polynomial regression, up to the desired order. Note that in the case of order, $d$ larger than 1, additional modifications of the prior on $M$ are necessary if one wants to exclude combinations in $\mathbf{F}_2$ without the corresponding term in $\mathbf{F}_1$. This can be easily done by forcing the model priors including such transformations to be 0.

\paragraph{Model selection.} In fractional polynomials models there are two sources of complexity to take into account when performing a model selection procedure: the number of regression parameters and the degree of the transformations. Following a paradigm of parsimony, one would ideally consider including variables only if they are related to the response and transforming the variables only if needed. In the framework of BGNLM, there are two separate (sets of) parameters to control the two sources: the parameter $q$ sets an upper bound to the number of active components of the models (i.e. the predictors and their transformations), while the $a_k$, $k=1, \dots, K$, set a cost to include individual predictors and can therefore be used to penalize harder more complex transformations. The penalty to add a linear term (the transformation belonging to $\mathbf{F}_0$) is set to be much lower than that for adding a transformation (those in $\mathbf{F}_1$ and $\mathbf{F}_2$). Fractional polynomials of order 2 are naturally penalized more as they require 2 terms, i.e., the penalty on the log scale has the form $\log a_{k'} + \log a_{k''}$.

\subsection{Model fitting via the Genetically Modified Mode Jumping MCMC}\label{sec:inference}

In fractional polynomials regression models with the mean parameter linked to the data through Equation \eqref{eq:LRM}, the model space can become prohibitively large even with only moderate values of the number of candidate predictors $J$. The strong correlation among predictors (especially that between different transformations of the same variable), moreover, can lead to many local minima in the posterior distribution, in which a standard fitting algorithm such as the classical MCMC may get stuck with a higher probability. To address these issues, we implement the Genetically Modified Mode Jumping MCMC algorithm proposed by \cite{hubin2020novel} to fit Bayesian fractional polynomials models. 

The key idea behind GMJMCMC is to iteratively apply a Mode Jumping MCMC algorithm \citep{hubin2016efficient} to smaller sets of model components of size $s: q \leq s \ll 16J$. This reduces the number of models in the model space to $\sum_{k=1}^q \genfrac(){0pt}{2}{16J}{k}$. A sequence of so-called populations $\mathcal{S}_1, \mathcal{S}_2, \dots, \mathcal{S}_{T_{max}}$ is generated. Each population $\mathcal{S}_t$ is a set of $s$ transformations and forms a separate search space for exploration through Mode Jumping MCMC iterations. The populations dynamically evolve allowing GMJMCMC to explore different parts of the total model space.

The generation of the new population $\mathcal{S}_{t+1}$ given $\mathcal{S}_t$ works as follows: some components with low posterior probability from the current population are removed, and then replaced by new components generated through mutation, multiplication, modification, or projection operators. The probabilities for each operator are defined as $P_{in}$, $P_{mu}$, $P_{mo}$, and $P_{pr}$, respectively, and must add up to 1. Since fractional polynomials are a specific case of BGNLM with only modification transformations allowed, in this context the algorithm is simplified by setting $P_{mu}=0$ and $P_{pr} = 0$. The algorithm is summarized below.

\begin{algorithm}[h]
\small
\caption{GMJMCMC}\label{gMJMCMCalg}
\begin{algorithmic}[1]
\State Initialize $\mathcal{S}_{0}$
\State Run the MJMCMC algorithm within the search space $\mathcal{S}_0$ for $N_{init}$ iterations and use results to initialize $\mathcal{S}_1$.
\For{$t=1,...,T-1$} 
\State Run the MJMCMC algorithm within the search space $\mathcal{S}_t$ for $N_{expl}$ iterations.\State Generate a new population $\mathcal{S}_{t+1}$
\EndFor
\State Run the MJMCMC algorithm within the search space $\mathcal{S}_{T}$ for $N_{final}$ iterations.
\end{algorithmic}
\end{algorithm}

For a complete description of the Mode Jumping MCMC, including its theoretical properties, we refer the reader to \citet{hubin2016efficient}. For further details on the GMJMCMC, see \citet{hubin2020novel}, while results on its asymptotic exploration of the space of nonlinear models are available in \citet{hubin2021flexible}.

\subsection{Using the output of GMJMCMC to compute the marginals of interest}

The posterior probability of a model $M$ given the observed data $Y$ can be expressed as the product of the prior probability of the model $P(M)$ and the marginal likelihood of the data given the model $P(Y|M)$ divided by the sum of the same expression over all possible models in the model space $\mathcal{M}$, which is infeasible to explore. To approximate this, the GMJMCMC algorithm searches for a set of good models $\Omega\subseteq\mathcal{M}$, and the resulting approximation for the posterior probability of a model $M$ given the data $Y$ is denoted as $$\widehat{P}(M|Y) = \frac{P(Y|M)P(M)}{\sum_{M'\in\Omega}P(Y|M')P(M')}.$$ The marginal inclusion probabilities for a specific effect $\gamma_{jk}$, denoted as $\widehat{P}(\gamma_{jk}=1|Y)$, can then be calculated as the sum of the approximated posterior probabilities over all models in $\Omega$ that include this effect, i.e. $$\widehat{P}(\gamma_{jk}=1|Y) = \sum_{M\in \Omega: \gamma_{jk} = 1} \widehat{P}(M | Y).$$ Further, marginal posterior of any other quantity of interest $\Delta$ can be approximated as $$\widehat{P}(\Delta|Y) =  \sum_{M\in\Omega}P(\Delta|Y,M)\widehat{P}(M|Y).$$ This allows us to make predictions based on the output of GMJMCMC. 

\subsection{Extensions of the model}

The description of the Bayesian fractional polynomials models seen so far only covers the GLM setting (see formula \eqref{eq:LRM}), but our approach can be easily extended to many other cases. Due to their particular practical relevance, here we cover the cases of generalized linear mixed models, the Cox regression model, and the model with interactions.

\subsubsection{Latent Gaussian models}

It is straightforward to extend our approach for generalized linear mixed models by incorporating both polynomial terms and latent Gaussian variables. These variables can be used to model correlations between observations in space and time, as well as over-dispersion. Basically, we just need to substitute the function $h$ in Equation \eqref{eq:LRM} with

\begin{equation}\label{BLRModel3}
\mathsf{h}\left(\mu\left(\boldsymbol{X}\right)\right) = \alpha + \sum_{j=1}^{J}\sum_{k = 1}^K \gamma_{jk}\beta_{jk} \rho_{k}(x_j)+\sum_{r=1}^{R}\gamma_{JK+r}\delta_r,
\end{equation}

where $\delta_r \sim N\left(\boldsymbol{0}, \boldsymbol\Sigma_r\right)$ are latent Gaussian variables with covariance matrices $\boldsymbol{\Sigma}_r$. These variables allow us to model different correlation structures between individual observations. The matrices typically depend only on a few parameters $\psi_r$, so that in practice we have $\boldsymbol{\Sigma}_r=\boldsymbol{\Sigma}_r(\boldsymbol\psi_r)$.

The model priors of Equation~\eqref{eq:modelprior} needs to be generalised to handle inclusion indicators $\gamma_{JK+r}, r = 1,\dots,R$ of the latent Gaussian variables and becomes:
\begin{equation*}
P(M)\propto\ \mathbb{I}\left(|M|\leq q\right)\prod_{j=1}^J\prod_{k=1}^K \mathbb{I}\left(\left[\sum_{k = 1}^K \gamma_{jk}\right]\leq d\right)a_k^{\gamma_{jk}}\prod_{r=1}^R \Tilde{a}_r^{\gamma_{JK+r}},
\end{equation*}
where $\Tilde{a}_r$ are prior inclusion penalties for the corresponding latent Gaussian variables. 

The parameter priors are adjusted as follows:
\begin{align}
\boldsymbol{\beta}|\boldsymbol{\gamma} \sim & N_{p_{\boldsymbol{\gamma}}}(\boldsymbol{0},I_{p_\gamma}e^{-\psi_{\beta_\gamma}}),\\
\boldsymbol{\psi}_k\sim&\pi_k(\boldsymbol{\psi}_k).\label{latentprior}
\end{align}

We can choose any type of hyperparameters of priors that are compatible with the integrated nested Laplace approximations (INLA) \citep{rue2009eINLA}. This allows us to efficiently compute the marginal likelihoods of individual models using the INLA approach \citep{HubinStorvikINLA}. Additionally, any other extensions with computable marginal likelihoods are possible within our framework, as the availability of the marginal likelihood is sufficient to run our inference algorithm described in Section~\ref{sec:inference}.

\subsubsection{Cox regression model}

Our approach can also be used to analyze time-to-event data, for example by using the Cox regression model. Here the adaptation of the formula in Equation \eqref{eq:LRM} is not straightforward, as the Cox regression model works with hazards and not densities,
$$
\lambda(y; \mu(\boldsymbol X)) = \lambda_0(y) \exp\{\mu(\boldsymbol X)\},
$$
where $\lambda_{0}(y)$ is the so-called baseline hazard function, i.e. that function that models the part of the hazard that does not depend on the predictors (including the intercept, we will not have a parameter $\alpha$ here). An additional complication with the time-to-event analysis is the presence of censored observations, i.e. those statistical units for which we only have part of the information (typically, that the event of interest happens after the last observed time). For our model fitting procedure, however, we just need to know the (partial) likelihood
\begin{equation}\label{CoxL}
L(\mu(\boldsymbol X)) = \prod_{i = 1}^n \frac {\mu(X_i)}{\sum_{r \in R(y_i)} \mu(X_r)},
\end{equation}
where $R(y_i)$ includes all the observations at risk at the time $y_i$, and consider
$$
\log(\mu(\boldsymbol X)) = \sum_{j=1}^{J}\sum_{k=1}^K \gamma_{jk}\beta_{jk} \rho_{k}(x_j).
$$
In the case of a partial likelihood in the form of Equation \eqref{CoxL} there is a handy approximation of the marginal likelihood provided by \cite{RafteryAl2005}. Here we used their results. The priors on the model and the parameter, instead, are the same as those of Section \ref{specialCase}.

\subsubsection{Fractional polynomials with flexible intercations}\label{interactions}

As a specific case of BGNLM, our BGNLM\_FP can be easily generalized to handle interactions up to a given order $I$ between the polynomial terms of different predictors, which results in the following generalization of our model

\begin{align*}
    h\left(\mu(\boldsymbol X)\right)=&\alpha+\sum_{j=1}^{J}\sum_{k=1}^K \gamma_{jk}\beta_{jk} \rho_{k}(x_j)+\sum_{j=1}^{J}\sum_{k=1}^K\sum_{j^{(1)}=1}^{J}\sum_{k^{(1)}=1}^K \gamma^{(1)}_{jkj^{(1)}k^{(1)}}\beta^{(1)}_{jkj^{(1)}k^{(1)}} \left[\rho_{k}(x_j)\times \rho_{k^{(1)}}(x_{j^{(1)}})\right] + \\
   & \sum_{j=1}^{J}\sum_{k=1}^K\sum_{j^{(1)}=1}^{J}\sum_{k^{(1)}=1}^K\sum_{j^{(2)}=1}^{J}\sum_{k^{(2)}=1}^K \gamma^{(2)}_{jkj^{(1)}k^{(1)}j^{(2)}k^{(2)}}\beta^{(2)}_{jkj^{(1)}k^{(1)}j^{(2)}k^{(2)}} \left[\rho_{k}(x_j)\times \rho_{k^{(1)}}(x_{j^{(1)}}) \times \rho_{k^{(2)}}(x_{j^{(2)}}) \right]\\ 
   &+\dots +\\
   & \sum_{j=1}^{J}\sum_{k=1}^K\sum_{j^{(1)}=1}^{J}\sum_{k^{(1)}=1}^K\sum_{j^{(2)}=1}^{J}\sum_{k^{(2)}=1}^K \dots \sum_{j^{(I)}=1}^{J}\sum_{k^{(I)}=1}^K \gamma^{(I)}_{jkj^{(1)}k^{(1)}j^{(2)}k^{(2)}\dots j^{(I)}k^{(I)}}\beta^{(I)}_{jkj^{(1)}k^{(1)}j^{(2)}k^{(2)}\dots j^{(I)}k^{(I)}}\\& \left[\rho_{k}(x_j)\times \rho_{k^{(1)}}(x_{j^{(1)}}) \times \rho_{k^{(2)}}(x_{j^{(2)}}) \times \dots \times \rho_{k^{(I)}}(x_{j^{(I)}}) \right] .
\end{align*}

Now, an extended vector $M = \{\gamma_{jk},\gamma^{(1)}_{jkj^{(1)}k^{(1)}},\gamma^{(2)}_{jkj^{(1)}k^{(1)}j^{(2)}k^{(2)}},...\gamma^{(I)}_{jkj^{(1)}k^{(1)}j^{(2)}k^{(2)}...j^{(I)}k^{(I)}}, j = 1, \dots, J, k = 1, c, K,j^{(i)} = 1, \dots, J, k^{(i)}= 1, \dots, K, i = 1, \dots I\}$ fully characterize a model with the order of interactions up to $I$. It then defines which predictors $X_j$, which transformation $\rho_{k}$ and which interactions between them are included in the model. Finally, we generalise the priors from Equation \eqref{eq:modelprior} by means of setting $d=\infty$ and defining $a_{kk^{(1)}}, a_{kk^{(1)}k^{(2)}}, \dots a_{kk^{(1)}k^{(2)}\dots k^{(I)}}$ as follows: 
\begin{align*}
    a_{kk^{(1)}} & = a_{k}\times a_{k^{(1)}}\\
    a_{kk^{(1)}k^{(2)}} & = a_{k}\times a_{k^{(1)}}\times a_{k^{(2)}}\\
    &\dots\\
    a_{kk^{(1)}k^{(2)}\dots k^{(I)}} &= a_{k}\times a_{k^{(1)}}\times a_{k^{(2)}} \times \dots \times a_{k^{(I)}}.
\end{align*}
The parameter priors here remain the same as those defined in Section \ref{specialCase}. And the inference is enabled by putting a non-zero value to the tuning parameter $P_{mu}>0$ in the GMJMCMC algorithm.

\section{Simulation studies}\label{sec:ART}

%\subsection{Settings}

%To evaluate numerically the consistency of our novel algorithm and contrast its performances with those of the current fractional polynomials models' implementations, we take advantage of the ART study \citep{RoystonSauerbrei2008}, an existing simulation design essentially created to assess fractional polynomials models. Based on a large breast cancer data set \citep{SchmoorAl1996}, the ART study provides a realistic framework when it concerns the distribution of the predictors and their correlation structure. See \citet[][Tables 10.1 and 10.3]{RoystonSauerbrei2008} for the details.

\subsection{Aims}
The primary goal of the simulation study is to evaluate numerically the consistency of our novel algorithm and contrast its performances with those of the current fractional polynomials models' implementations. In particular, we want to assess the ability of recovering the true data-generating process when increasing the signal-to-noise ratio, or, at least, of selecting the relevant variables.

\subsection{Data-generating mechanism}
We take advantage of the ART study \citep{RoystonSauerbrei2008}, an existing simulation design essentially created to assess fractional polynomials models. Based on a large breast cancer data set \citep{SchmoorAl1996}, the ART study provides a realistic framework when it concerns the distribution of the predictors and their correlation structure. See \citet[][Tables 10.1 and 10.3]{RoystonSauerbrei2008} for the details. More specifically, the ART study consists of 6 continuous ($x_1$, $x_3$, $x_5$, $x_6$, $x_7$, $x_{10}$) and 4 categorical explanatory variables, in turn, divided into an ordered three-level ($x_4$), an unordered three-level ($x_9$) and two binary ($x_2$ and $x_8$) variables. For an initial analysis of the data, we refer to \cite[][Chapter 10]{RoystonSauerbrei2008}. The response is computed through the model
$$
y = x_1^{0.5} + x_1 + x_3 + x_{4a} + x_5^{-0.2} + \log\{x_6 + 1\} + x_8 + x_{10} + \epsilon,
$$
where $x_{4a}$ denotes the second level of $x_4$ (the first being used as a baseline) and $\epsilon\sim N(0;1)$. The instances used in the original simulation study are available at \url{http://biom131.imbi.uni-freiburg.de/biom/Royston-Sauerbrei-book/Multivariable_Model-building/downloads/datasets/ART.zip} and directly used here. In total, the data set has $n=250$ observations. 

While the original study is interesting to evaluate the FP approach in a likely situation, it does not allow us to fully investigate the properties of our algorithm. For example, model misspecifications like $x_5^{-0.2}$ prevent us to evaluate how often the algorithm selects the true model. 

In order to study the properties of our algorithm, we propose a modification of the existing study. We change the original model by including a FP(-1) effect for $x_5$ (instead of $x_5^{-0.2}$, such that the true model belongs to the set of the possible models) and by modifying the effect of $x_3$ from linear to a FP2(-0.5, -0.5), to make the search for the true model more challenging. The new response generating mechanism now follows
\begin{equation}\label{trueGF}
y = x_1^{0.5} + x_1 + x_3^{-0.5} + x_3^{-0.5} * \log(x_3+\varepsilon) + x_{4a} + x_5^{-1} + \log(x_6 + \varepsilon) + x_8 + x_{10} + \epsilon,
\end{equation}
where again $x_{4a}$ denotes the second level of $x_4$ and $\varepsilon = 0.00001$ is a small positive real number used to avoid problems with the support of a logarithm function. As in the original formulation \citep[][Ch. 10]{RoystonSauerbrei2008}, the true regression coefficients are all set equal to 1. Finally, $\epsilon\sim N(0,\sigma^2)$ and we run 16 different scenarios with $\sigma^2 \in \{100, 50, 25, 10, 1, 0.1, 0.01, 10^{-3}, 10^{-4}, 10^{-5}, 10^{-6}, 10^{-7}, 10^{-8}, 10^{-9}, 10^{-10}\}$ allowing to quantify the consistency of model selection for the compared approaches when increasing signal-to-noise ratios (that, under the Gaussian distribution, is mathematically equivalent to increasing the sample size).

\subsection{Estimands}
The quantity of interest in this simulation study is the model's ability to recover the significant relationships between predictors and responses. We distinguish between functional (strict) and predictor (soft) levels. In the former, we are interested in how often the model selects the true functional form (and how often a wrong one); in the latter, in how often the model selects a relevant variable (no matter in which form) and how often it incorrectly selects an irrelevant variable. The variable level is useful because the difference between functional forms may be quite small (consider, for example, the logarithmic and the square root transformation) and therefore not impact too much the result.

For each method, we consider the transformations and the variables from the model that the respective method declared to be (with respect to the criterion used in the method) optimal for a data set. For the Bayesian approaches, a variable is classified as detected if the (estimated) marginal inclusion probability is larger than 0.5. This corresponds to the median probability model of \citet{BarbieriBerger2004}.

%\subsection{Competitors and software implementation}
\subsection{Methods}

Our novel model, hereafter BGNLM\_FP, was fitted by GMJMCMC algorithm using the \texttt{EMJMCMC} package available at \textit{\texttt{http://aliaksah.github.io/EMJMCMC2016/}}. The simulations for each $\sigma^2$ were run on $32$ parallel threads $100$ times. Each thread was run for $20,000$ iterations with a mutation rate of $250$ and the last mutation at iteration $15,000$. The population size of the GMJMCMC algorithm was set to $20$. For the detection of the functional forms, the median probability rule \citep{BarbieriBerger2004} was used. For all simulation scenarios, we specified the following values of  the hyper-parameters of the model priors:  $q=20$ and $d=16$. Further, $a_k$ was chosen to be $a_k = \exp(-\log n)$ for $k: \rho_k \in \mathbf{F}_0$, $a_k = \exp(- (1+\log 2)\log n)$ for $k: \rho_k \in \mathbf{F}_1$, and $a_k = \exp(- (1+\log 4)\log n)$ for $k: \rho_k \in \mathbf{F}_2$. No additional fine-tuning was performed to specify our tuning and hyper-parameters.

For comparison, the frequentist version of multivariate fractional polynomials (MPF) was fitted using the R package \texttt{mfp} \citep{HeinzeAl2022}. We allowed for fractional polynomials of maximal order $2$, and used a significance level $\alpha = 0.05$. 

The current Bayesian version of fractional polynomials by \cite{SabanesboveHeld2011} was fitted using the R package \texttt{bfp} \citep{SabanesboveaL2022}, with ``flat'' (BFP\_F), ``sparse'' (BFP\_S), and ``dependent'' (BFP\_D) priors. We used the default value of 4 as the hyperparameter for hyper-g prior and the option ``sampling'' to explore the posterior model space. In this case as well, the median probability rule used to detect the functional form.

\subsection{Performance metrics}
To estimate our estimands, and consequently evaluate the performance of the different algorithms, we compute:
\begin{itemize}
 \item True Positive Rate (TPR), measuring the ability to detect the truly relevant predictors;
 \item False Discovery Rate (FDR), measuring the incorrect selections %performed by the model.
 of predictors that are not relevant.
\end{itemize}

Both TPR and FDR are computed at the functional and predictor levels. In the former level, a find is considered a ``true positive'' only if the model includes the true variable with the correct transformation. At the latter level, instead, we consider it sufficient to select a relevant predictor (so even if the model includes the variable with an incorrect transformation). Similarly, for the FDR, a ``false positive'' is any functional form that is not included in the model generative function (functional level) or a variable that is not included at all (variable level).

%Both TPR and FDR are computed at the functional (strict) and predictor (soft) levels. The former level is the most difficult to handle, as a find is considered a ``true positive'' only if the model includes the true variable with the correct transformation. At the latter level, instead, we consider it sufficient to select a relevant predictor (so even if the model includes the variable with an incorrect transformation). Similarly, for the FDR, a ``false positive'' is any functional form that is not included in the model generative function (functional level) or a variable that is not included at all (variable level). The variable level is useful because the difference between functional forms may be quite small (consider, for example, the logarithmic and the square root transformation) and therefore not impact too much the result.

%In order to give an insight into the variability, for each scenario and for each of the compared methods, we used a simple bootstrapping approach based on 100 resamples with replacement of 20 simulations to obtain medians and $95\%$ confidence intervals of the evaluation metrics.

\subsection{Results}

\begin{figure}[!ht]
  \centering
  \includegraphics[trim={1cm 0 1cm 0},clip,scale=0.5]{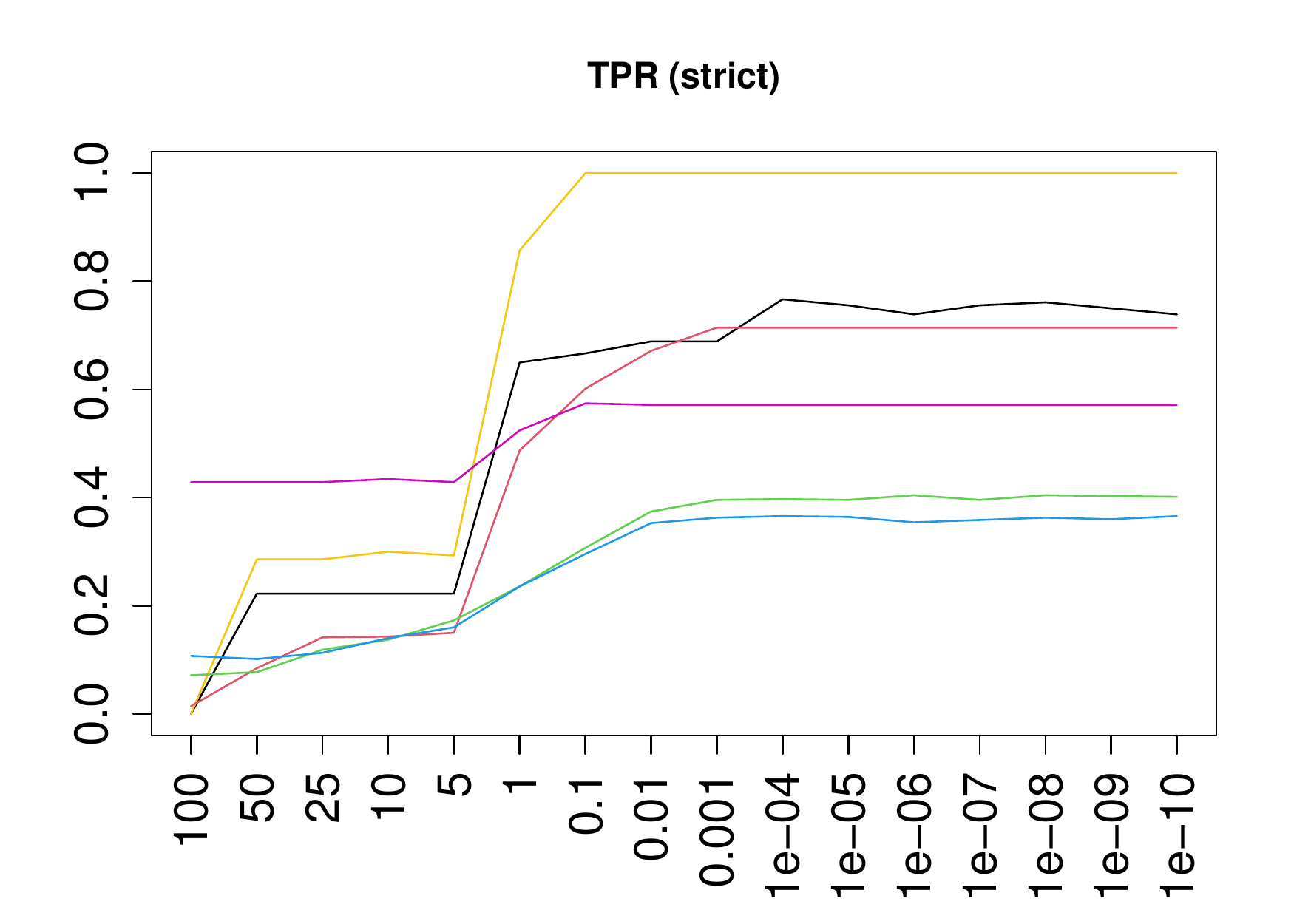}
  \includegraphics[trim={1cm 0 1cm 0},clip,scale=0.5]{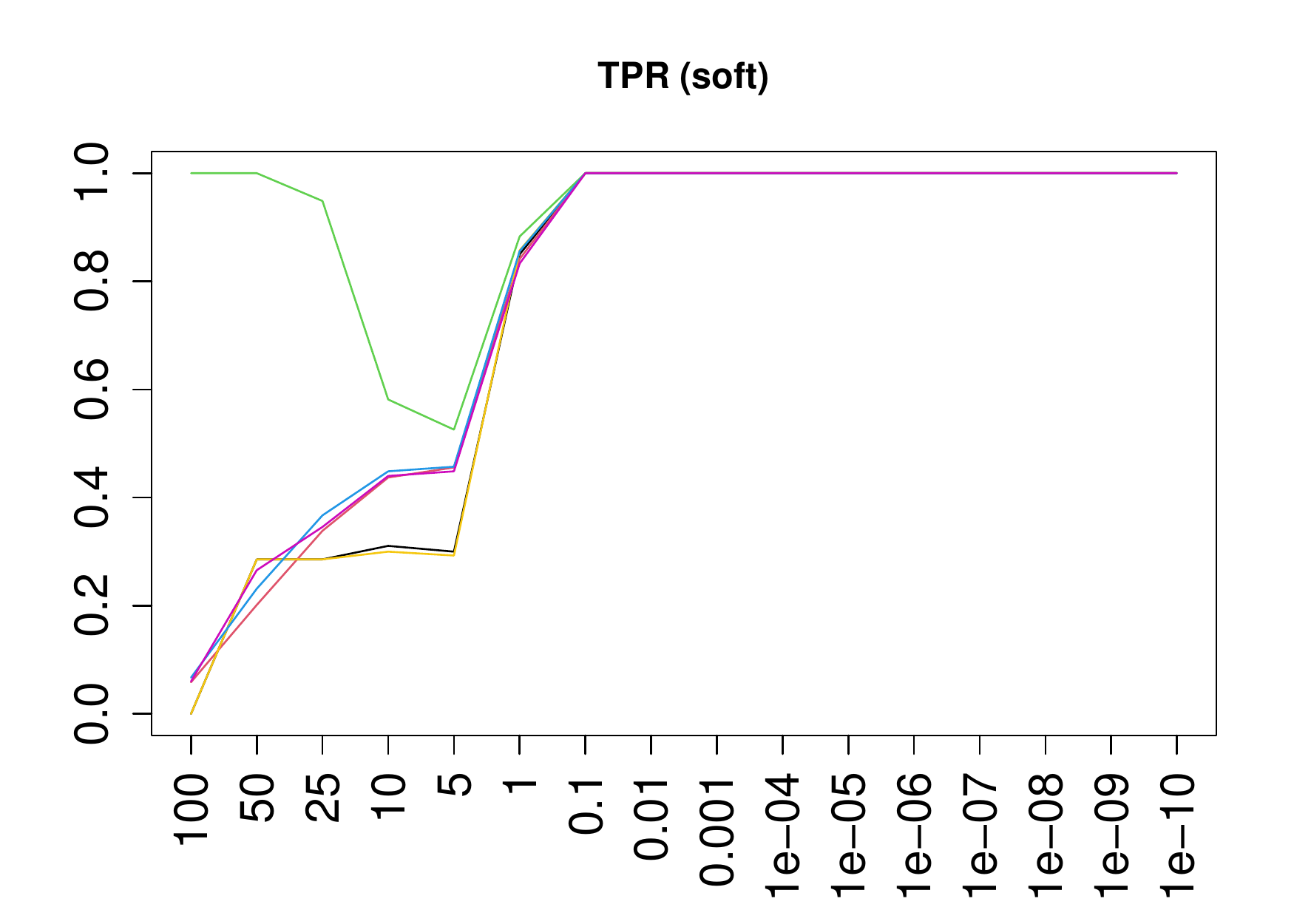}
  \includegraphics[trim={1cm 0 1cm 0},clip,scale=0.5]{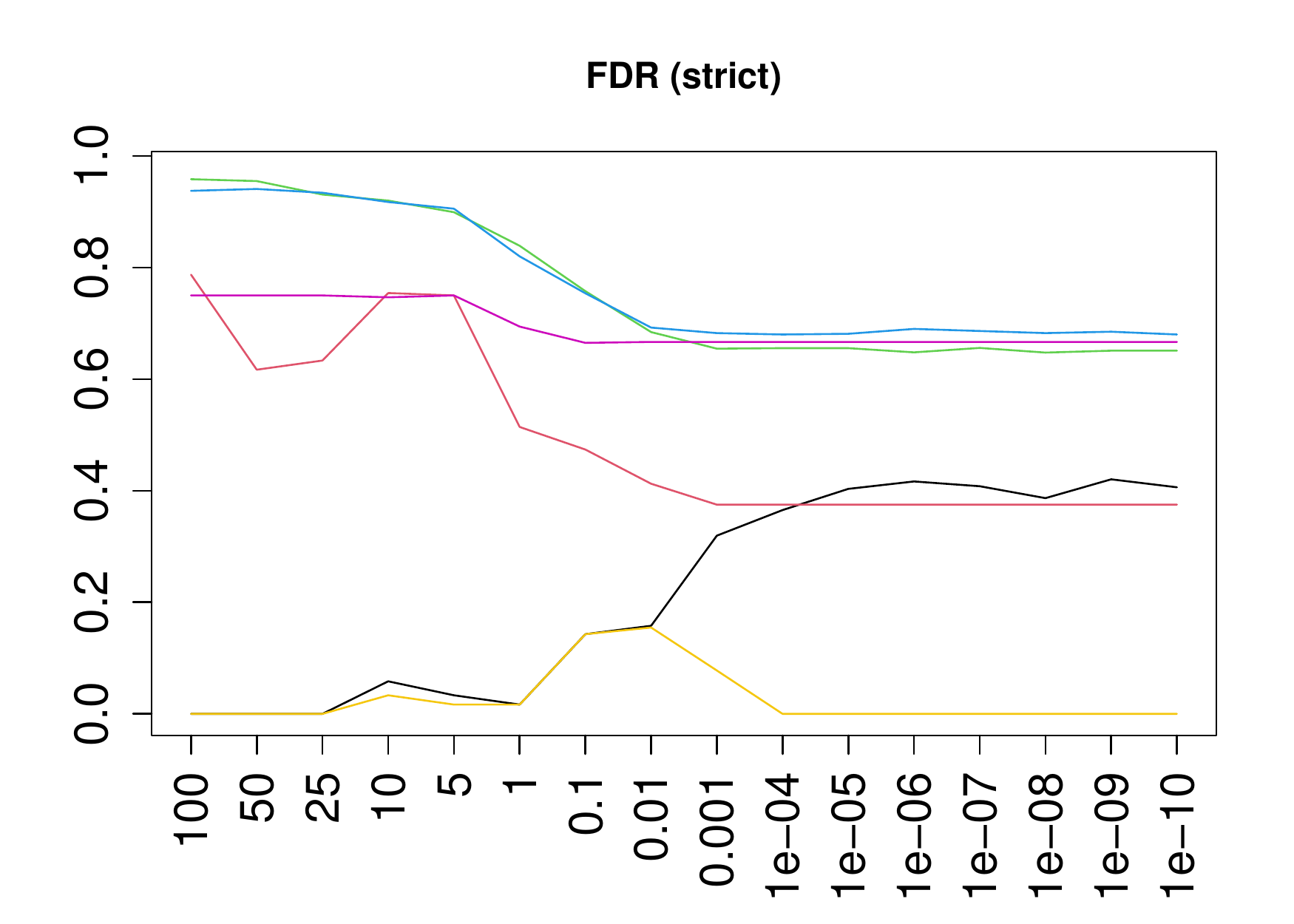}
  \includegraphics[trim={1cm 0 1cm 0},clip,scale=0.5]{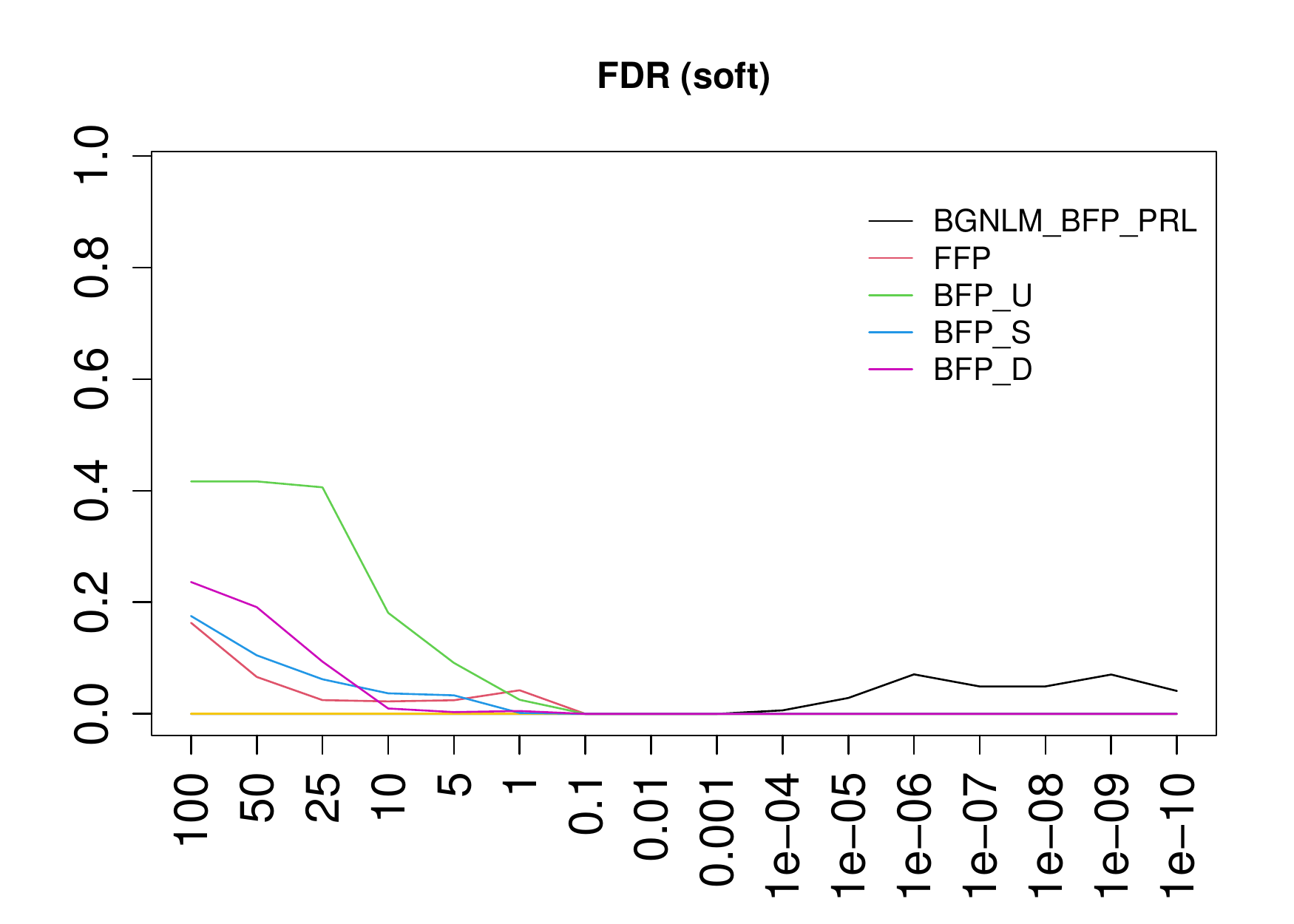}     \caption{\label{fig:pmeans} TPR (top row) and FDR (bottom row), at the functional (left panel) and at the variable (right) level, for: BGNLM\_FP (black), MFP (red), BFP with ``flat'' (BFP\_F, green), ``sparse'' (BFP\_S, blue), and ``dependent'' (BFP\_D, purple) priors. The yellow line shows BGNLM\_FP if the true model was forced into its search (BGNLM\_FP\_IDEAL).}
\end{figure}

In Figure~\ref{fig:pmeans}, top row, we see that the TPR grows, both at the functional (left panel) and at the variable level (right panel), for all approaches when the signal-to-noise ratio increases. At the functional level, our approach (BGNLM\_FP) and MFP are uniformly outperforming the current Bayesian approaches when there is enough signal (from $\sigma^2 \approx 1$ on), while BFP with a data-dependent prior (BFP\_D) works the best for low signal-to-noise data. Note, moreover, that BGNLM\_FP has a better performance than MFP in almost all scenarios, with the single exception of $\sigma = 0.001$. At the variable level, all approaches have a comparable TPR, with the notable exception of BFP\_F on the low signal-to-noise ratios. In fact, in these cases, BFP with a flat prior selects all variables with a linear effect, which raises some doubts about the choice of this prior in this context. Obviously, as a consequence, the FDR for BFP\_F at the variable level (Figure~\ref{fig:pmeans}, bottom right panel) is worse than all the others. 

At both the functional and variable level, we see the FDR decreasing with larger signal-to-noise for both MFP and BFP (bottom line of Figure~\ref{fig:pmeans}, left and right panels, respectively). While noticeably smaller than that of the competitors in almost all cases, a bit surprisingly the FDR of BGNLM\_FP becomes larger for a stronger signal. This counter-intuitive behavior is most probably related to the strong correlation between the true functional forms and other FP transformations. Even GMJMCMC seems to be stuck in some local extrema for lower noise levels, which under the same number of iterations did not allow it to reach the close neighborhood of the true model. When we force the search path to also include the true model (BGNLM\_FP\_IDEAL), in contrast, the median probability model correctly identifies it as the best option (see the yellow line in the left panels of Figure~\ref{fig:pmeans}): for noise levels smaller or equal to $\sigma^2 = 0.0001$, the TPR is 1 and the FDR is 0, showing in the practice of consistency of model selection under our family of priors. Also, the problem with FPR is almost negligible at the variable level (right panels of Figure~\ref{fig:pmeans}).

Figure \ref{fig:mlik} displays this BGNLM\_FP's behavior through the maximum value of the posterior. The GMJMCMC algorithm, whose search for the best model reaches the largest possible value (that of the true model, red in the figure) for lower signal-to-noise ratios, at certain points settles for a ``good enough'' value and does not reach anymore the largest one given the same number of iterations. This can be explained by that with a larger signal the correlation between the response and almost right fractional polynomials increases leading to stronger local extrema from which it is becoming harder to escape.

\begin{figure}[!ht]
    \centering
    \includegraphics[trim={1cm 0 1cm 0},clip,scale=0.5]{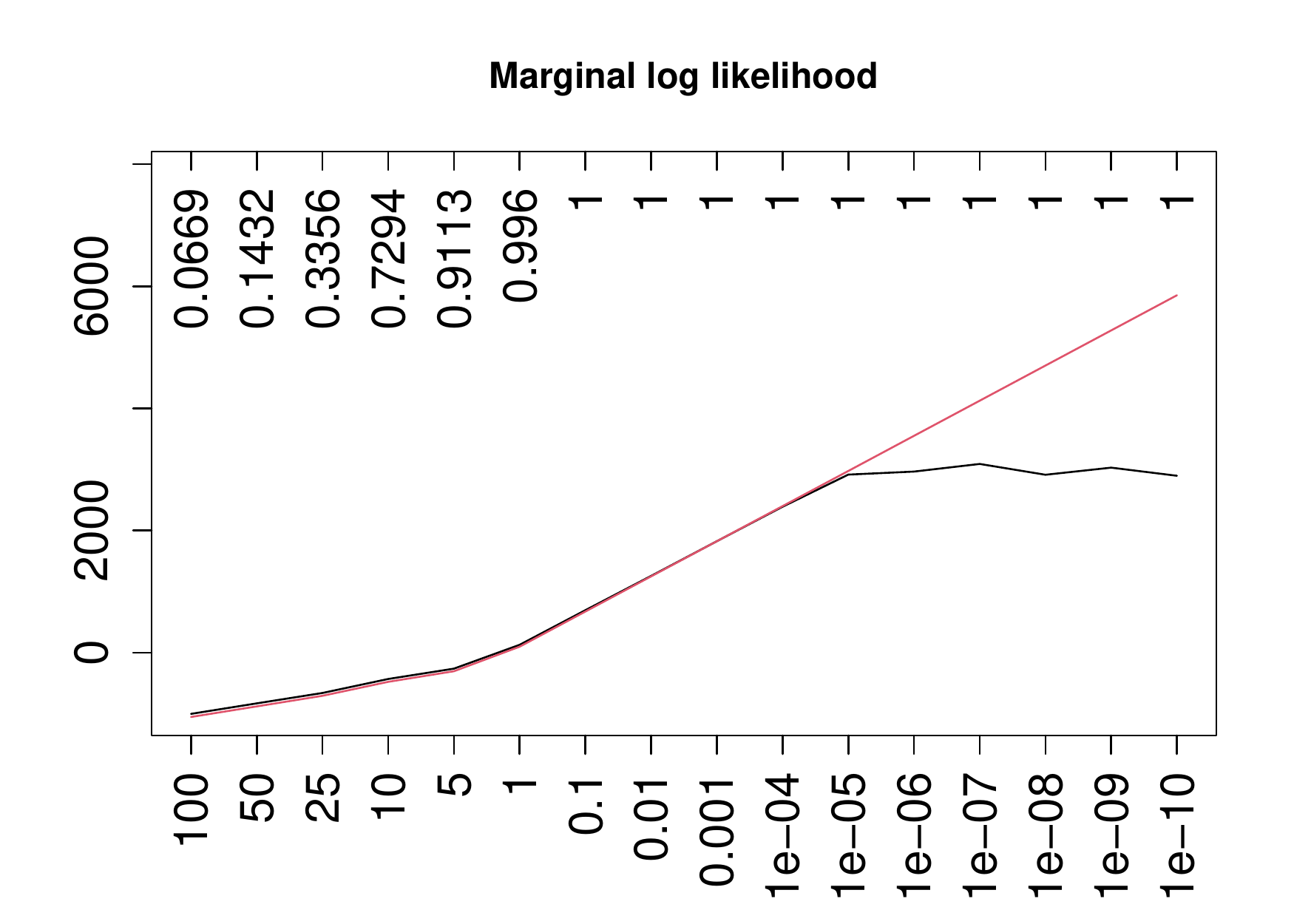}
    \caption{\label{fig:mlik} Best log marginal posteriors found with GMJMCMC (black) and those of a data generative model (red). Upper axis report $R^2$ of the true model.}
\end{figure}

To better appreciate the performance of all approaches on a single dimension, we also report in Figure~\ref{fig:x3} the TPR for the variable $X_3$. The left panel shows the TPR at the variable level: the task of identifying $X_3$ as relevant seems pretty easy, as all reasonable (remember that BFP\_F includes all variables) approaches start to include it when $\sigma^2$ is between $0.1$ and $1$. Even giving an unfair advantage in the case of BGNLM\_FP\_IDEAL does not change much, as the same process happens at almost the same signal-to-ratio level.

\begin{figure}[!ht]
    \centering
   \includegraphics[trim={1cm 0 1cm 0},clip,scale=0.5]{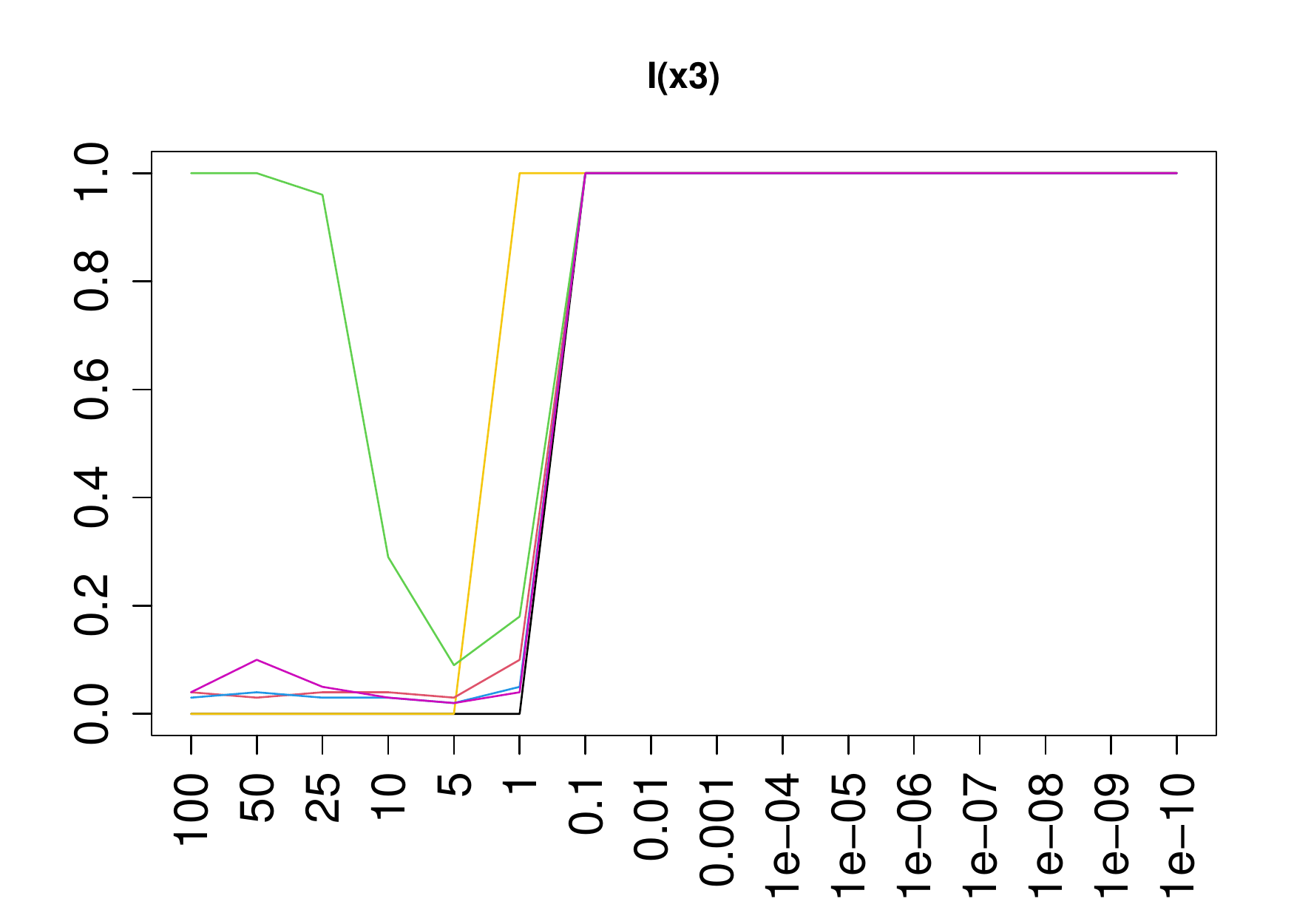}
    \includegraphics[trim={1cm 0 1cm 0},clip,scale=0.5]{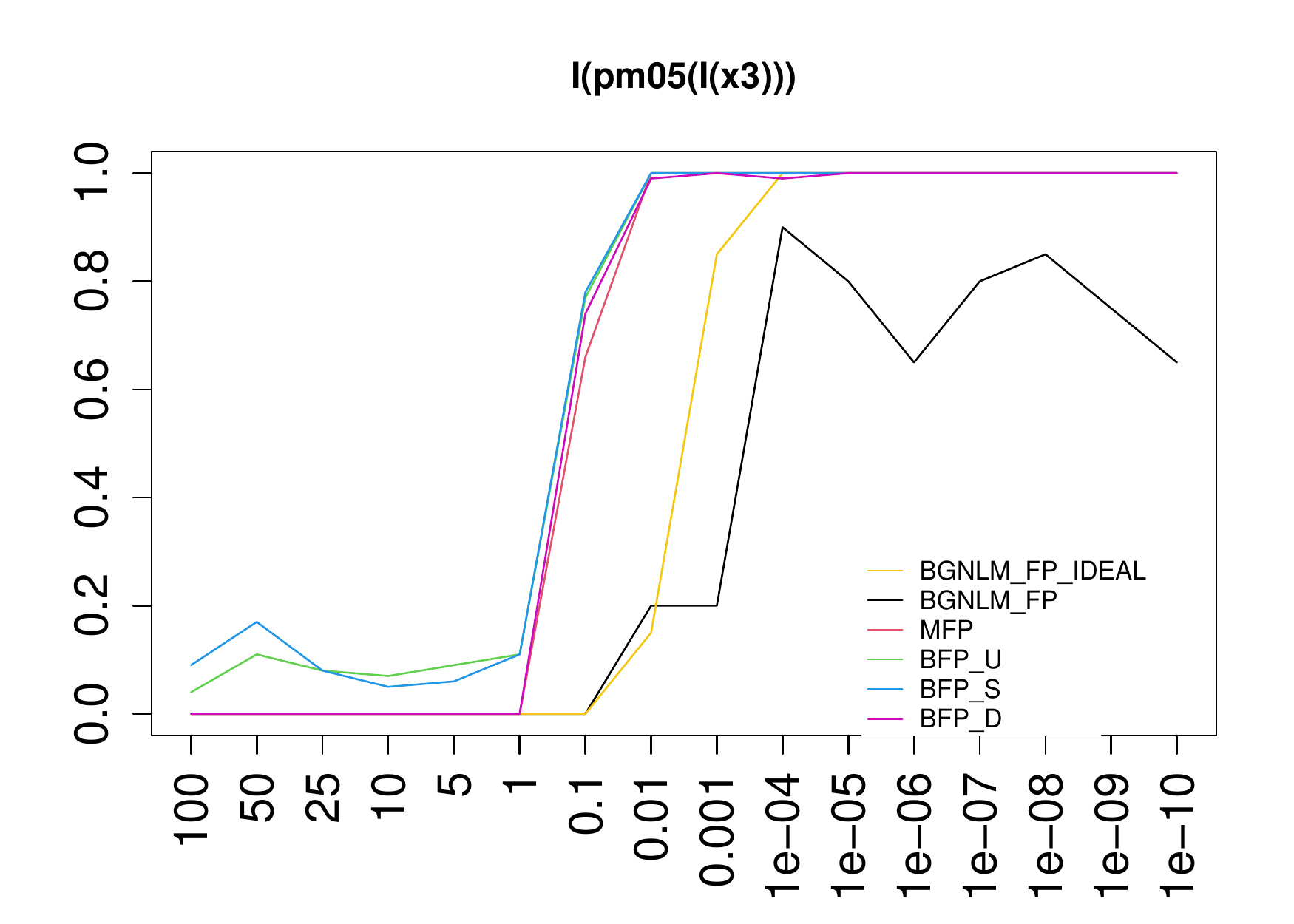}
    \caption{\label{fig:x3} Left panel: TPR for $x_3$ at the variable level for all approaches:  BGNLM\_FP (black), MFP (red), BFP with ``flat'' (BFP\_F, green), ``sparse'' (BFP\_S, blue), and ``dependent'' (BFP\_D, purple) priors. The yellow line shows BGNLM\_FP if the true model was forced into its search. Right panel: Same plot for $x_3$ at the FP(1) level, i.e., considering ``true positive'' the functional form FP1($x_3$, -0.5).}
\end{figure}

More interestingly, MFP and all the BFP models seem to almost always include $X_3$ as FP1(-0.5), meaning that they identify the first-order part of the transformation. BGNLM\_FP does not have the same behavior, and even at the largest signal-to-noise ratio sometimes stuck to correlated form (most probably FP1(-1), see the right panel of Figure \ref{fig:x3}). None of the approaches (except for the artificial BGNLM\_BFP\_IDEAL), anyhow, manages to identify the correct FP2(-0.5, -0.5) form (data not shown), no matter how large the signal-to-noise ratio is. Only the version of BGNLM\_FP\_IDEAL with the right model forced into the search path does, starting at $\sigma^2 = 0.01$ and fully happening from $\sigma^2 = 0.0001$. It can be inferred by contrasting the right plot of Figure \ref{fig:x3} and the top left panel of Figure \ref{fig:pmeans}.

\section{Real data applications}\label{sec:real}

In this section, we contrast our approach with many competitors in real data examples with responses of different natures, namely a continuous response, a binary response, and a time-to-event response. Note that the Bayesian approaches  considered here are based on sampling from the posterior, and therefore contain a stochastic component. For these algorithms, we perform 100 runs and report the median, the minimum, and the maximum result. As a consequence, we can also evaluate their stability.

\subsection{Regression task on the Abalone shell dataset}

The Abalone dataset, publicly available at \url{https://archive.ics.uci.edu/ml/datasets/Abalone}, has served as a reference dataset for prediction models for more than two decades. The goal is to predict the age of the abalone from physical measurements such as gender, length, diameter, height, whole weight, peeled weight, the weight of internal organs, and the shell. The response variable, age in years, is obtained by adding 1.5 to the number of rings. There are a total of $4177$ observations in this dataset, of which $3177$ were used for training and the remaining $1000$ for testing.

To compare all approaches, we use the following metrics:
\begin{itemize}
  \item Root Mean Square Error: $\text{RMSE} = \sqrt{\frac{\sum_{i=1}^{n_\text{test}}(\hat y_i -y_i)^2}{n_\text{test}}}$;
  \item Mean Absolute Error: $\text{MAE} = {\frac{\sum_{i=1}^{n_\text{test}}|\hat y_i-y_i|}{n_\text{test}}}$;
  \item Pearson's correlation between observed and predicted response: $\text{CORR}={\frac{\sum_{i=1}^{n_\text{test}}(\hat y_i^-\Bar{\hat y})( y_i-\Bar{y})}{\sqrt{\sum_{i=1}^{n_\text{test}}(\hat y_i-\Bar{\hat y})^2}\sqrt{\sum_{i=1}^{n_\text{test}}( y_i-\Bar{y})^2}}}$.
\end{itemize}
where $n_\text{test}$ here equal to $1000$, is the test sample size. In addition to the aforementioned approaches, here we also include the original BGNLM (see Equation \eqref{eq:BGNLM}) from \citet{hubin2021flexible} and a version with only linear terms BGLM (see Equation \eqref{eq:LRM}, with $\rho(x) = x$). 

\begin{table}[!ht]{
\resizebox{\textwidth}{!}{%
\begin{tabular}{llll}%
\hline 
Model&RMSE&MAE&CORR\\\hline
BGNLM&1.9573 (1.9334, 1.9903) & 1.4467 (1.4221, 1.4750) & 0.7831 (0.7740, 0.7895)\\
BFP\_F& 1.9649  (1.9649, 1.9649) & 1.4617  (1.4617, 1.4617) & 0.7804  (0.7804, 0.7804)\\
BFP\_S& 1.9649  (1.9649, 1.9649) & 1.4617  (1.4617, 1.4617) & 0.7804  (0.7804, 0.7804)\\
BGNLM\_FP & 1.9741 (1.9649, 2.0056) & 1.4679 (1.4623, 1.4896) & 0.7783 (0.7702, 0.7805)\\
MFP & 1.9792 (-, -) & 1.4710 (-, -) & 0.7770 (-, -)\\
BFP\_D & 1.9754 (1.9754, 1.9769) & 1.4668 (1.4668, 1.4677) & 0.7779 (0.7774, 0.7779)\\
BGLM & 2.0758 (2.0758, 2.0758) & 1.5381 (2.0758, 2.0758) & 0.7522 (2.0758, 2.0758)\\
\hline
\end{tabular}}
} 
\caption{\label{tAbalone} Abalone shell dataset: Prediction performances in terms of Root Mean Square Error (RMSE), Mean Absolute Error (MAE), and Pearson's correlation between observed and predicted response (CORR) for the different models. For methods with variable outcomes, the median measures (with minimum and maximum in parentheses) are displayed. The methods are sorted according to the median RMSE.} 
\end{table}

For BGNLM\_FP, GMJMCMC was run on $32$ parallel threads for each of the $100$ seeds. Each thread was run until $10,000$ unique models were visited, with a mutation rate of $250$ and the last mutation at iteration $10,000$. The population size of the GMJMCMC algorithm was set to $15$. For all runs, we had the following hyper-parameters of the model priors:  $q=15$ and $d=15$. Further, $a_k$ was chosen to be $a_k = \exp(-\log n)$ for $k: \rho_k \in \mathbf{F}_0$, $a_k = \exp(- (1+\log 2)\log n)$ for $k: \rho_k \in \mathbf{F}_1$, and $a_k = \exp(- (1+\log 4)\log n)$ for $k: \rho_k \in \mathbf{F}_2$. 

The best performance (see Table \ref{tAbalone}) is obtained with the general BGNLM approach. This is probably not surprising, as the relationship between response and explanatory variables seems complex (see also Table \ref{tAbalone2}) and BGNLM is the most flexible approach, which contains all the other models as special cases. Notably, this result seems to show that the GMJMCMC algorithm is effective in exploring the huge model space. On the other hand, the performance of BGLM, ranking the worst in all the three metrics considered, shows the importance of including non-linear effects in the model when analyzing this data set.

Between these two extremes lie all the FP implementations. Our proposed approach BGNLM\_FP seems slightly better than MFP and BFP\_D but worse than the other two implementations of BFP (BFP\_F and BFP\_S), which, in this case, have exactly the same performances. Nonetheless, no matter which metrics we consider, the differences among all FP-based approaches are very small.

Table \ref{tAbalone2} provides insight into the variable selection for our approach. This helps us to identify non-linear effect and give us a hint of the variable importance for the prediction task. The frequency of inclusion shows that all 9 explanatory variables were selected in all 100 simulation runs, meaning that each variable is relevant, at least with a linear effect. In addition, many non-linear effects had a posterior probability larger than 0.1 (see the right column of Table \ref{tAbalone2}). In particular, the variables WholeWeight, ShuckedWeights, Height, Length, and VisceraWeights seem to have an effect between quadratic and cubic, while Shell Weight seems to have a logarithmic effect, as the logarithmic transformation is selected 58\% of the times (third row of Table \ref{tAbalone2}), against around 20\% of other transformations (quadratic 21\%, cubic 16\%, and $x^{-0.5}$ 13\%). The presence of these non-linear polynomials in the model indicates that the relationship between the explanatory variables and the response (abalone age) is most probably non-linear and highlights the importance of using methods like BGNLM\_FP to predict the outcome.

\begin{table}[!ht]{
\begin{center}
\begin{tabular}{lc|lc}%
Liner Effects &Frequency&Non linear Effects&Frequency\\
\hline 
ShuckedWeight & 100 & WholeWeight (p = 2) & 68\\
Male & 100 & ShuckedWeight (p = 2) & 59\\
Diameter & 100 & ShellWeight (p = 0) & 58\\
Length & 100 & ShuckedWeight (p = 3) & 46\\
WholeWeight & 100 & Height (p = 2) &43\\
Height & 100 & Length (p = 3) & 39\\
VisceraWeight & 100 & VisceraWeight (p = 3) & 32\\
ShellWeight & 100 & VisceraWeight (p = 2) & 31\\
Femele & 100 & Height (p = 3) & 30\\
&& Length (p = 2) & 29\\
&& WholeWeight (p = 3) & 28\\
&& ShellWeight (p = 2) & 21\\
&& Height (p = 0) & 20\\
&& ShellWeight (p = 3) & 16\\
&& ShellWeight (p = -0.5) & 13\\
&& ShuckedWeight (p = 0) & 10
\end{tabular}
\end{center}
} 
\caption{\label{tAbalone2} Abalone shell dataset: Frequency of selection of the 9 explanatory variables and all nonlinear transformations with a posterior inclusion probability above 0.1 in more than 10 out of 100 simulation runs (in brackets the ``power'' of the transformation. The frequency is the number of simulations that include the given feature, linear features are listed on the left, and nonlinear on the right.}   
\end{table}

\subsection{Classification task on the Wisconsin breast cancer dataset}

The example uses breast cancer data with $357$ benign and $212$ malignant tissue observations, which were obtained from digitized fine needle aspiration images of a breast mass. The data can be found at the website \url{https://archive.ics.uci.edu/ml/datasets/Breast+Cancer+Wisconsin+(Diagnostic)}. Each cell nucleus is described by $10$ characteristics, including radius, texture, perimeter, area, smoothness, compactness, concavity, points of concavity, symmetry, and fractal dimension. For each variable, the mean, standard error, and mean of the top $3$ values per image were calculated, resulting in $30$ explanatory variables per image. The study used a randomly selected quarter of the images as the training data set, and the rest of the images were used as the test set.

As in the previous example, we compare the performance of the BGNLM\_FP to that of other methods, namely MFP, BGNLM, and its linear version BGLM. As BFP is not available for classification tasks, it could not be included in the comparison. BGNLM\_FP uses Bernoulli observations and a logit link function, and the variance parameter is fixed at $\phi=1$. Forecasts are made using $\hat{y}_i = \text{I}\left(\hat p(Y_i=1)\ge 0.5 \right)$, where $Y_i$ represents the response variable in the test set. The model averaging approach is used for prediction, where marginal probabilities are calculated using the Laplace approximation.

For BGNLM\_FP, GMJMCMC was run on $32$ parallel threads for each of the $100$ seeds. Each thread was run until $10,000$ unique models were visited, with a mutation rate of $250$ and the last mutation at iteration $10,000$. The population size of the GMJMCMC algorithm was set to $45$. For all runs, we had the following hyper-parameters of the model priors:  $q=45$ and $d=16$. $a_k$ was chosen to be as follows: $a_k = \exp(-\log n)$ for $k: \rho_k \in \mathbf{F}_0$, $a_k = \exp(- (1+\log 2)\log n)$ for $k: \rho_k \in \mathbf{F}_1$, and $a_k = \exp(- (1+\log 4)\log n)$ for $k: \rho_k \in \mathbf{F}_2$. 

To evaluate the performance of the models we computed the following metrics:
\begin{itemize}
    \item Prediction Accuracy: $\text{ACC} = \frac{\sum_{i=1}^{n_\text{test}}\text{I}(\hat y_i=y_i) }{n_\text{test}}$;
    \item False Positive Rate: \text{FPR} = $\frac{\sum_{i=1}^{n_\text{test}} \text{I}\left(y_i=0,\hat y_i=1\right)}{\sum_{i=1}^{n_\text{test}}  \text{I}\left(y_i = 0\right)}$;
    \item False Negative Rate: $\text{FNR} = \frac{\sum_{i=1}^{n_\text{test}}  \text{I}\left(y_i=1,\hat y_i=0\right)}{\sum_{i=1}^{n_\text{test}} \text{I}\left(y_i=1\right)}$.
\end{itemize}

The choice of these metrics is in line with that of \cite{hubin2020novel}, and allows direct comparison with the results therein.

\begin{table}[t]{
\resizebox{\textwidth}{!}{%
\begin{tabular}{llll}%
\hline 
Model&ACC&FNR&FPR\\\hline
BGNLM & 0.9742 (0.9695,0.9812) & 0.0479 (0.0479,0.0536) & 0.0111 (0.0000,0.0184)\\
BGLM & 0.9718 (0.9648,0.9765) & 0.0592 (0.0536,0.0702) & 0.0074 (0.0000,0.0148)\\
BGNLM\_FP & 0.9601 (0.9554,0.9648) & 0.0756 (0.0702,0.0809) & 0.0756(0.0702,0.0809)\\
MFP & 0.9413 (-,-) & 0.1011 (-,-) & 0.0255 (-,-)\\\hline
\end{tabular}}
}
\caption{\label{t2} Breast cancer classification dataset: Prediction performances in terms of Prediction Accuracy (ACC),  False Negative Rate (FNR), and False Positive Rate (FPR) for the different models. For methods with variable outcomes, the median measures (with minimum and maximum in parentheses) are displayed. The models are sorted according to median accuracy.} 
\end{table}

Table \ref{t2} presents the results for each metric. We can see that BGNLM\_FP performs better than MFP both in terms of prediction accuracy and false negative rate, while it is slightly worse than MFP when it concerns the false positive rate. Both FP-based models, however, perform worse than both BGNLM and its linear version BGLM. The very good performance of the latter, almost as good as the former in terms of accuracy and FNR, and even slightly better in terms of FPR, seems to suggest that non-linearities are not very important for this classification problem. This also explains why there is not much advantage in using an FP-based method. Both the frequentist and our proposed procedures tend to only select linear effects, as can be seen (for BGNLM\_FP) from Table \ref{tbc2}, where all the effects selected more than 10 (out of 100) runs are reported. The same happens for BGNLM \citep[see][Table 4]{hubin2021flexible}: even the most general model only selects mostly linear effects. The reason why BGNLM and BGLM have better results in this example is most probably related to better use of the priors (see the Discussion for more on this point).

\begin{table}[!ht]{
\begin{center}
\begin{tabular}{lc|lc}%
Effects &Frequency&Effects&Frequency\\
\hline 
fractal\_dimension\_se&100&radius\_worst&100\\
smoothness\_se&100&symmetry\_mean&100\\
concave.points\_se&100&texture\_mean&100\\
fractal\_dimension\_worst&100&compactness\_se&100\\
concavity\_mean&100&compactness\_worst&100\\
area\_worst&100&texture\_worst&100\\
smoothness\_mean&100&concavity\_worst&100\\
perimeter\_mean&100&perimeter\_se&100\\
compactness\_mean&100&concavity\_se&100\\
concave.points\_worst&100&symmetry\_worst&100\\
perimeter\_worst&100&area\_se&100\\
texture\_se&100&radius\_se&100\\
smoothness\_worst&100&fractal\_dimension\_mean&100\\
symmetry\_se&100&area\_mean&100\\
radius\_mean&100&concave.points\_mean&97\\
\hline
\end{tabular}
\end{center}
} 
\caption{\label{tbc2}  Breast cancer classification dataset: Frequency of selection of the explanatory variables with a posterior inclusion probability above 0.1 in more than 10 out of 100 simulation runs.}   
\end{table}

%In the breast cancer data, it can be seen from Table \ref{tbc2} that the most frequently selected features are all linear features and there is no occurrence of any nonlinear fractional polynomial terms. .  Thus, linear Bayesian regression performs on par with BGNLM. On the other hand, the performance of FFP and BFP may be worse because they may overfit the data and find non-relevant nonlinearities. However, unlike FFP and BFP, BGNLM\_BFP does not overfit the data due to its different priors, resulting in a more conservative regularisation inherited from BGNLM. This helps BGNLM\_BFP to avoid overfitting and ensures its good performance on this data.

\subsection{Time-to-event analysis on the German breast cancer study group dataset}

As an example outside the GLM context, we consider a dataset with a time-to-event response. In particular, the German Breast Cancer Study Group dataset contains data from 686 patients with primary node-positive breast cancer enrolled in a study from July 1984 to December 1989. Out of the 686 patients, 299 experience the event of interest (death or cancer recurrence), while the remaining 387 are censored observations. The data are publicly available at \url{https://www.uniklinik-freiburg.de/fileadmin/mediapool/08_institute/biometrie-statistik/Dateien/Studium_und_Lehre/Lehrbuecher/Multivariable_Model-building/gbsg_br_ca.zip} and contains information about 8 variables: 5 continuous (age, tumor size, number of positive nodes, progesterone status, and estrogen status), 2 binary (menopausal status and hormonal treatment) and one ordinal variable with 3 stages (tumor grade). The training set contains about 2/3 of the observations (457), with the remaining 1/3 forming the test set. The observations are randomly split, but the proportion of censored observations is forced to be the same in the two sets.

As in the previous examples, here we compare our approach BGNLM\_FP with a few competitors, namely the general BGNLM, its linear version BGLM, the classical MFP, and a linear version of the latter as well. All approaches are based on the partial likelihood of Equation \eqref{CoxL}, so all approaches provide a Cox model, with the latter model being the simple Cox regression model. Also, in this case, BFP is out of the game as it is only developed for Gaussian responses.

For BGNLM\_FP, GMJMCMC was run on $32$ parallel threads for each of the $100$ seeds. Each thread was run for $20,000$ iterations, with a mutation rate of $250$ and the last mutation at iteration $15,000$. The population size of the GMJMCMC algorithm was set to $15$. For all runs, we had the same as in all other examples hyper-parameters of the model priors:  $q=15$ and $d=15$. Further, $a_k$ was chosen to be $a_k = \exp(-\log n)$ for $k: \rho_k \in \mathbf{F}_0$, $a_k = \exp(- (1+\log 2)\log n)$ for $k: \rho_k \in \mathbf{F}_1$, and $a_k = \exp(- (1+\log 4)\log n)$ for $k: \rho_k \in \mathbf{F}_2$.

To evaluate the performance of the models, here we compute the following metrics:
\begin{itemize}
    \item Integrated Brier Score: $\text{IBS} = \int_0^{\max\{t_\text{test}\}} \frac{1}{n_\text{test}}\sum_{i=1}^{n_\text{test}} \hat{W}_i(t) \left(I(T_i > t) - \hat{S}_i(t) \right)^2 \, dt$;
    \item Concordance index: \text{C-index} = $\frac{ \sum_{i, j = 1}^{n_\text{test}} I(T_j < T_i) I(\mu(X_j) > \mu(X_i))\tilde{W}_i(t)\delta_j}{\sum_{i, j = 1}^{n_\text{test}} I(T_j < T_i) \tilde{W}_i(t)\delta_j}$;
\end{itemize}
where $\hat{W}_i(t)$ and $\tilde{W}_i(t)$ are weights, based on the Inverse Probability of Censoring Weighting \citep{ChengAl1995}, needed to take into account the presence of censored observations, $S_i$ is the survival function and $\delta_i$ is the censoring status for the patient $i$. Both IBS and C-index are computed through the R package \texttt{pec} \citep{Gerds2022}.

%\begin{table}[!ht]{
%\resizebox{\textwidth}{!}{%
%\begin{tabular}{llll}%
%\hline 
%Model&CINDEX&IBS1&IBS2\\\hline
%{BGNLM\_BFP{\_}PRL} & 0.6025 (0.5306,0.6248) & 0.1513 (0.1499,0.1537) & 0.2009 (0.1447,0.2631)\\
%BFP{\_}PRL (AIC) & 1.9860 (1.9667,2.0338) & 1.4746	(1.4633,1.5099)	& 0.7752 (0.7640,0.7780)\\
%{BGNLM\_BFP{\_}PRL{\_}I} & 0.5882 (0.4959,0.6444) & 0.1508 (0.1495 ,0.1543) & 0.1759 (0.1431,0.3403)\\
%{BGNLM\_BFP} & 0.5827 (0.4579,0.6396) & 0.1513 (0.1504,0.1535) & 0.1581 (0.1380,0.3909)\\
%BFP{\_}PRL (AIC) & 1.9860 (1.9667,2.0338) & 1.4746	(1.4633,1.5099)	& 0.7752 (0.7640,0.7780)\\
%{BGNLM\_BFP{\_}I} & 0.5280 (0.4175,0.6661) & 0.1518 (0.1500,0.1541) & 0.1669 (0.1408,0.4709)\\
%FFP& 0.5164 (-,-) & 0.1520 (-,-) & 0.1704 (-,-)\\
%LR & 0.4802 (-,-)& 0.1520 (-,-) & 0.1840 (-,-)\\
%\hline
%\end{tabular}}
%} 
%\caption{\label{tsurv3} Results on survival data.} 
%\end{table}

\begin{table}[!ht]
\begin{center}
\begin{tabular}{lll}%
\hline 
Model&IBS&CINDEX\\\hline
MFP  & 0.1609 (-,-) & 0.6939 (-,-) \\
BGNLM\_FP & 0.1619 (0.1604,0.1635) & 0.6913 (0.6871,0.6960) \\
%{BGNLM\_BFP{\_}PRL{\_}I} & 0.6885 (0.6663,0.6973) & 0.1895 (0.1886,0.1898) & 0.1623 (0.1597,0.1660)\\
% {BGNLM\_BFP} & 0.6821 (0.6141,0.7083) & 0.1891 (0.1882,0.1897) & 0.1638 (0.1570,0.1775)\\
%BFP{\_}PRL (AIC) & 1.9860 (1.9667,2.0338) & 1.4746	(1.4633,1.5099)	& 0.7752 (0.7640,0.7780)\\
%{BGNLM\_BFP{\_}I} & 0.6717 (0.5989,0.7129) & 0.1893 (0.1883,0.1902) & 0.1663 (0.1554,0.1771)\\
BGNLM & 0.1677 (0.1647,0.1792) & 0.6656 (0.6319,0.6801) \\
%BGNLM&  0.6591 (0.6100,0.7128) & 0.1891 (0.1877,0.1899) & 0.1688 (0.1555,0.1813)\\
BGLM & 0.1697 (0.1697,0.1697) & 0.6497 (0.6494,0.6500) \\
linear & 0.1701 (-,-) & 0.6184 (-,-) \\
\hline
null model & 0.1893 & 0.504 \\
\hline
\end{tabular}
\end{center} 
\caption{\label{tsurv31} German Breast Cancer Study Group dataset: Prediction performances in terms of Integrated Brier Score (IBS) and concordance index (C-index) for the different models. For methods with variable outcomes, the median measures (with minimum and maximum in parentheses) are displayed. The models are sorted according to the median IBS.} 
\end{table}

Table \ref{tsurv31} reports the results of this experiment. This dataset was used by \cite{RoystonSauerbrei2008} to illustrate the fractional polynomials, so probably not surprisingly the two FP-based approaches have the best performance. Both MFP and our proposed BGNLM\_FP are better than the competitors, especially those based on linear effect. It is known, indeed, that the effect of the variable nodes is not linear \citep[][Section 3.6.2]{RoystonSauerbrei2008}, and our approach finds this non-linearity 100\% of the times (see Table \ref{tsurvl2}). A bit more surprisingly BGNLM does not perform so well in this example, but it is most probably related to the fact that the extreme simplicity of a good model (at least the one found by BGNLM\_FP only contains two explanatory variables, one of them even with a simple linear effect) does not justify the use of complex machinery.

\begin{table}[!ht]{
\begin{center}
\begin{tabular}{lc|lc}%
Liner Effect & Frequency & Non linear Effect & Frequency\\
\hline 
progesterone status & 100 & FP1(number of positive nodes, 0) & 100\\
\hline
\end{tabular}
\end{center}
} 
\caption{\label{tsurvl2} German Breast Cancer Study Group dataset: Frequency of selection of the variables / nonlinear transformations with a posterior inclusion probability above 0.1 in more than 10 out of 100 simulation runs. Linear on the left, nonlinear on the right.}  \end{table}

\subsection{Including interaction terms into the models}

As discussed in Section \ref{interactions},  our approach makes it straightforward to add interaction terms in the Bayesian fractional polynomials models. Mathematically, we need to go back to formula \eqref{eq:BGNLM} and also consider bivariate transformation, while algorithmically we need to enable multiplication operators in the GMJMCMC algorithm. In this section, we report the results obtained by BGNLM\_FP with interactions. We kept all other tuning parameters of GMJMCMC unchanged, except for allowing multiplications. Also, all hyperparameters of the models are unchanged except for setting $d = \infty$ and $I = 4$.

\paragraph{Abalone data.} As we can see in Table \ref{tAbalone3}, allowing interactions into the model enhances the performance of the BGNLM\_BFP model on the Abalone shell age dataset. Adding the interactions is not sufficient to reach the performances of the general BGNLM, but it considerably reduces the gap.

\begin{table}[!ht]{
\resizebox{\textwidth}{!}{%
\begin{tabular}{llll}%
\hline 
Model&RMSE&MAE&CORR\\\hline
BGNLM&1.9573 (1.9334,1.9903)&1.4467 (1.4221,1.4750)&0.7831 (0.7740,0.7895)\\
BGNLM\_FP\_withInteraction & 1.9660 (1.9397,2.0039)&1.4514	(1.4326,1.4759)&0.7812	(0.7705,0.7874)\\
BGNLM\_FP & 1.9741 (1.9649,2.0056)& 1.4679	(1.4623,1.4896) & 0.7783 (0.7702,0.7805)\\
\hline
\end{tabular}}
} 
\caption{\label{tAbalone3} Abalone shell dataset: results for the BGNLM\_FP model when allowing for interactions. The results for BGNLM and BGNLM\_FP are reported form Table \ref{tAbalone} for comparison.} 
\end{table}

\paragraph{Breast cancer classification data.} As expected from the results of Table \ref{t2}, in the case of the breast cancer classification dataset, in which BGLM already performs better than BGNLM\_FP, incorporating interaction terms in the FP model does not produce any substantial advantage (see Table \ref{t23}). This does not come as a surprise, as nonlinearities do not seem to play a credible role in the prediction model.

\begin{table}[!ht]{
\resizebox{\textwidth}{!}{%
\begin{tabular}{llll}%
\hline 
Model&ACC&FNR&FPR\\\hline
BGNLM & 0.9742 (0.9695,0.9812) & 0.0479 (0.0479,0.0536)&0.0111 (0.0000,0.0184)\\
BGNLM\_FP\_withInteractions &0.9601	(0.9554,0.9671)& 0.0702	(0.0647,0.0809)& 0.0702 (0.0647,0.0809)\\
BGNLM\_FP & 0.9601 (0.9554,0.9648) & 0.0756 (0.0702,0.0809) & 0.0756 (0.0702,0.0809)\\
\hline
\end{tabular}}
}
\caption{\label{t23} Breast Cancer classification dataset: results for the BGNLM\_FP model when allowing for interactions. The results for BGNLM and BGNLM\_FP are reported form Table \ref{t2} for comparison.}   
\end{table}

\paragraph{German Breast Cancer Study Group data.}  Finally, Table \ref{tsurv312} shows the result of the model with interaction for the time-to-event data analysis. There is no advantage in allowing for interactions here as well. This is a typical case of the advantage related to the bias-variance trade-off when using simpler models for prediction tasks. We can notice from its IBS values that the model with interaction can have the best performance (0.1597), but the performances vary so much (as bad as 0.1660) that in the median is worse than the simpler model (that without interactions). Note, moreover, that Table \ref{tsurvl2} seems to suggest that there are only two relevant variables, so it is not likely to have relevant interactions.

\begin{table}[!ht]
\begin{center}
\begin{tabular}{lll}%
\hline 
Model& IBS & CINDEX \\\hline
BGNLM\_FP & 0.1619 (0.1604,0.1635) & 0.6913 (0.6871,0.6960) \\
BGNLM\_FP\_withInteractions & 0.1623 (0.1597,0.1660) & 0.6885 (0.6663,0.6973) \\
BGNLM & 0.1677 (0.1647,0.1792) & 0.6656 (0.6319,0.6801) \\
\hline
\end{tabular}
\end{center}
\caption{\label{tsurv312} German Breast Cancer Study Group dataset: results for the BGNLM\_FP model when allowing for interactions. The results for BGNLM and BGNLM\_FP are reported from Table \ref{t2} for comparison.} 
\end{table}

\section{Discussion}\label{sec:conclusions}

In this paper, we studied how BGNLM fitted by GMJMCMC introduced by \citet{hubin2021flexible} can deal with fractional polynomials. It can be seen as an opportunity of fitting a BGNLM that can handle nonlinearities without any loss in model interpretability, and, more importantly, as a convenient implementation of a fractional polynomial model that assures a coherent inferential framework, without losing (if not gaining) anything in terms of prediction ability. The broad generality of the BGNLM framework, moreover, allows adding complexity with a minimum effort, as we show for the inclusion of interaction terms.

%The approach was then compared to the existing implementation of MFP by \citet{HeinzeAl2022} and the implementation of BFP by \citet{SabanesboveaL2022}. The simulation study shows promising results of BGNLM\_FP. It outperforms BFP in terms of a strict definition of Power and FDR. Also, the performance is on par with that of FFP. 
Note that the current implementation is based on a direct adaptation of the priors defined in \citet{hubin2021flexible}. Further investigations on the choice of the priors will certainly be beneficial and further improve the performances of BGNLM\_FP, as we already noticed in the simulation study of Section \ref{sec:ART}. For example, a better balance between the penalty for the different fractional polynomial forms can be implemented. Our real data experiments never showed evidence in favor of a fractional polynomial of order 2. This may be related to the implausibility of a FP(2) transformation, especially in a prediction context where simplicity is often awarded, but it may also indicate that we penalized these terms too much.

%In regards to predictive tasks, the results suggest that BGNLM\_BFP performs similarly to both FFP and BFP in terms of RMSE, MAE, and correlation for  regression tasks. It is also  on par with FFP for classification and survival prediction tasks (standard BFP is not implemented for these types of problems). However, BGNLM surpasses all of these methods in terms of performance. The superior performance of BGNLM over BFP in predictive tasks can be attributed to its more comprehensive framework, which allows for not only nonlinear effects but also interactions. %The results demonstrate that BGNLM can deliver the best results in terms of accuracy, FNR, FPR, RMSE, MAE, and correlation when compared to the other methods.

One drawback with the Bayesian version of the fractional polynomial models is the computational costs of fitting it. This is not specific to our approach, it also concerns the current BFP implementation of \cite{SabanesboveHeld2011} and can become an issue in the case of very large datasets. Currently, we distribute the computational workload across multiple processors to achieve conference to descent regions of the model space. In the future, subsampling the data can allow for a reduction in computational cost when computing the marginal likelihoods. This will allow using the Bayesian fractional polynomial approach in big data problems as well. Furthermore, in the case of large datasets, Laplace approximations of the marginal likelihoods become very accurate, making this approach even more appealing. To make the computations efficient, stochastic gradient descent (SGD) can be used to compute the Laplace approximations, which also guarantees convergence \citep{lachmann2022subsampling} of MJMCMC in the class of Bayesian GLMs. Therefore, incorporating data subsampling and using SGD to compute the Laplace approximations may be a promising future direction in inference on Bayesian fractional polynomials under the settings of large $n$.

Another challenge is selecting appropriate values for the tuning parameters of GMJMCMC. The tuning parameters in GMJMCMC control the proposal distributions, population size, frequencies of genetic operators, and other characteristics of the Markov chain. Their values can significantly affect the convergence and mixing properties of the algorithm. To deal with this challenge, one may perform extensive tuning of the algorithm, which involves testing a range of values for the tuning parameters and evaluating the performance of the algorithm using problem-specific diagnostic tools like Power-FDR in simulations or RMSE for regression prediction tasks. A detailed discussion of setting tuning parameters of GMJMCMC is given in the Rejoinder \citep{hubin2020rejoinder} to the Discussion of \citet{hubin2020novel}. In the future, it may also be interesting to develop adaptive tuning methods that automatically adjust the tuning parameters based on the performance of the algorithm similarly to how it was done in \citet{hubin2019adaptive}.

GMJMCMC is a Markov chain Monte Carlo algorithm that is designed to explore the space of models with non-zero posterior probabilities. However, as with any MCMC algorithm, there is a risk that the chain may not converge to the desired target distribution in a finite time. This means in our settings that the set of models with non-zero posterior probabilities may not be fully explored in a single run of the algorithm.
One consequence of this is that the estimates obtained from GMJMCMC may vary from run to run, since different runs may explore different parts of the model space. Even if the algorithm is run for a long time, there is still a positive probability that it may miss some of the models with non-zero posterior probabilities. Variance in the estimated posterior in turn induces variance in the predictions if the latter is of interest. To mitigate this issue, it is recommended to run the algorithm multiple times using as many of the available resources as one can and check for convergence of the estimates. Additionally, it may be helpful to use informative model priors or other techniques to help guide the algorithm toward the most relevant parts of the model space.

Even though there are still a few challenges and limitations in the current state of Bayesian fractional polynomials, these models have potential applications in a variety of areas where uncertainty handling, explainability, and nonlinear relationships are essential. These include but are not limited to fields such as pharmacology, epidemiology, finance, and engineering. In pharmacology, for example, fractional polynomials can be used to model the dose-response relationship between a drug and a patient's response, taking into account the nonlinear and complex relationships between the variables. In finance, fractional polynomials can be used to model the relationship between financial variables such as stock prices, interest rates, and exchange rates, and to quantify the uncertainty associated with these relationships. Similarly, in engineering, fractional polynomials can be used to model the relationship between variables such as stress, strain, and material properties, providing a way to make predictions while taking into account nonlinear relationships and the associated uncertainty. In all of these cases, Bayesian fractional polynomials offer a flexible and robust way to handle uncertainty and model nonlinear relationships, making them a useful tool for a wide range of applications in the future. Given the important role that uncertainty handling, explainability, and nonlinear relationships play in various applications, we hope that the novel Bayesian fractional polynomials inference algorithm presented in this paper as well as the suggested extensions to various practical settings like survival analysis and GLM will allow these often overlooked models be more widely used in the future.

\bibliographystyle{hapalike}
\bibliography{biblio}

\end{document}